\theoremstyle{plain}
\newtheorem{example}{Example}
\newtheorem{definition}{Definition}
\newtheorem{theorem}{Theorem}
\newtheorem{lemma}{Lemma} 
\def\leq{\leqslant}
\def\geq{\geqslant}
\def\cS{{\mathcal{S}}}
\def\cS{{\mathcal{S}}}
\def\bbZ{{\mathbb{Z}}}
\def\bbR{{\mathbb{R}}}
\def\cX{{\mathcal{X}}}
\def\cY{{\mathcal{Y}}}
\def\beq{\begin{equation}\begin{split}}
\def\eeq{\end{split}\end{equation}}
\begin{document}

\title{\LARGE Zero-error feedback capacity via dynamic programming}
%

\author{Lei Zhao and Haim Permuter
\thanks{Lei Zhao is with the Department of Electrical Engineering,
 Stanford University, Stanford, CA 94305. Haim Permuter is with Department of Electrical \& Computer
 Engineering Department,  Ben-Gurion University of the Negev, Beer-Sheva, Israel.
Authors' emails: leiz@stanford.edu and haimp@bgu.ac.il.}
}%

\maketitle

\begin{abstract}
In this paper, we study the zero-error capacity for finite state
channels with feedback when channel state information is known to
both the transmitter and the receiver. We prove that the zero-error
capacity in this case can be obtained through the solution of  a
dynamic programming problem. Each iteration of the dynamic
programming provides lower and upper bounds on the zero-error
capacity, and in the limit, the lower bound coincides with the
zero-error feedback capacity. Furthermore, a sufficient condition
for solving the dynamic programming problem is provided through a
fixed-point equation. Analytical solutions for several examples are
provided.
\end{abstract}

\begin{keywords}
Bellman equations, competitive Markov decision processes, dynamic
programming, feedback capacity, fixed-point equation,
infinite-horizon average reward, stochastic games, zero-error
capacity.
\end{keywords}

\section{Introduction}

In 1956, Shannon \cite{shannon56} introduced the concept of
zero-error communication, which requires that the probability of
error in decoding any message transmitted through the channel to be
zero. 
Although the zero-error capacity for general channels remains an
unsolved problem (see \cite{KornerOrlitski98Zero-error} for a
comprehensive survey of zero-error information theory), Shannon
\cite{shannon56} showed that for discrete memoryless channels (DMC)
with feedback the zero-error capacity is either zero (if any two
inputs can generate a common output) or equal to:
\begin{equation}\label{e_memoryless_zero}
C_0^{FB}=\max_{P_X} \log_2 \left[ \max_{y} \sum_{x\in G(y)} P_X(x)
\right]^{-1},
\end{equation}
where $P_X$ is the channel input distribution, $y$ is an output
realization of the channel, and $G(y)$ is the set of inputs that
have a positive probability of generating the output $y$, i.e.,
$G(y)\triangleq \{x: P_{Y|X}(y|x)>0 \}$. The achievability proof of
(\ref{e_memoryless_zero}) is based on a determinist scheme rather
than on a random coding scheme, as used for showing the
achievability of regular capacity.

In this paper, we study the zero-error feedback capacity for finite
state channels (FSC), a family of channels with memory. We make the
assumptions that channel state information (CSI) is available both
to the transmitter and to the receiver. In this case, we solve the
zero-error capacity that depends only on the topological properties
of the channel. A similar setup has been used by Chen and Berger
\cite{Chen05}, who solved the regular channel capacity by finding
the optimal stationary and nonstationary input processes that
maximize the long-term directed mutual information. In
\cite{Ahl_kaspi87} and \cite{PermuterCuffVanRoyWeissman07_Chemical},
the zero-error capacity of the chemical channel with feedback was
derived. The chemical channel is a special case of FSCs. With
feedback, the transmitter knows the state of the chemical channel
while the receiver does not, which is different from our setup.
Other related work can be found in \cite{NayaRose05}, which
addresses the zero-error capacity for compound channels.

The remaining of the paper is organized as follows. In Section
\ref{s_channel_model}, we introduce the channel model and the
dynamic programming problem formulation. In Section
\ref{s_sufficient_C=0}, we use a finite-horizon dynamic programming
(DP)  to provide a condition for the channel to have zero zero-error
capacity. In Section \ref{s_DP_associated}, we define an
infinite-horizon average reward DP problem and link its solution the
the zero-error capacity. In Sections \ref{sec.converse} and
\ref{sec.direct}, we prove the converse and direct parts
respectively. In Section \ref{s_seolving_DP}, we explain how to
evaluate the infinite-horizon average reward DP; in particular, we
provide a sequence of lower and upper bounds that are easy to
compute and prove the Bellman equation theorem for the particular
DP, namely, a fixed-point equation that is a sufficient condition
for verifying the optimality of a solution. In Section
\ref{s_examples}, we evaluate and then find analytically the zero
error feedback capacity of several examples.

\section{Channel Model and Preliminaries}\label{s_channel_model}
We use calligraphic letter $\cX$ to denote the alphabet and $|\cX|$
to denote the cardinality of the alphabet. Subscripts and
superscripts are used to denote vectors in the following way:
$x^j=(x_1,...,x_j)$ and $x_i^j=(x_i,...,x_j)$ for $i\leq j$. Next we
introduce the channel model and the DP formulation.

\subsection{Channel model and zero-error capacity definition}
An FSC\cite[ch. 4]{Gallager68} is a channel that, at each time
index, has a state whic belongs to a finite set $\cS$ and has the
property that, given the current input and state, the output and the
next state is independent of the past inputs, outputs and states,
i.e.,
\begin{equation}
 p(y_t,s_{t+1}|x_1^t,s_1^t)=p(y_t,s_{t+1}|x_t,s_t).
\end{equation}
For simplicity, we assume that the channel has the same input
alphabet $\cX$ and the same output alphabet $\cY$ for all states.
The alphabets $\cX$ and $\cY$ are both finite. Without loss of
generality, we can assume that $\cX=\{1,2,...,|\cX|\}$. We consider
the communication setting shown in Fig.~\ref{fig.ch}, where the
state of the channel is known to the encoder and to the decoder.

\begin{figure}{
 \psfrag{M}[][][1]{Message}
 \psfrag{m}[][][1]{$m$}
 \psfrag{E}[][][1]{Encoder}
 \psfrag{A}[][][1]{$\;\;x_i(m,y^{i-1},s^i)$}
 \psfrag{x}[][][1]{$x_i$}
 \psfrag{C}[][][1]{FSC}
 \psfrag{P}[][][1]{\ \ \;\;$p(y_i,s_{i+1}|x_i,s_i)$}
 \psfrag{y}[][][1]{$y_i,s_{i+1}\;$}
 \psfrag{D}[][][1]{Decoder}
 \psfrag{MH}[][][1]{$\;\;\hat m(y^n,s^{n+1})$}
 \psfrag{mh}[][][1]{$\hat{m}$}
 \psfrag{UD}[][][0.9]{Unit Delay}
 \psfrag{yd}[][][1]{$y_{i-1},s_{i}$}
\centerline{\includegraphics[width=13cm]{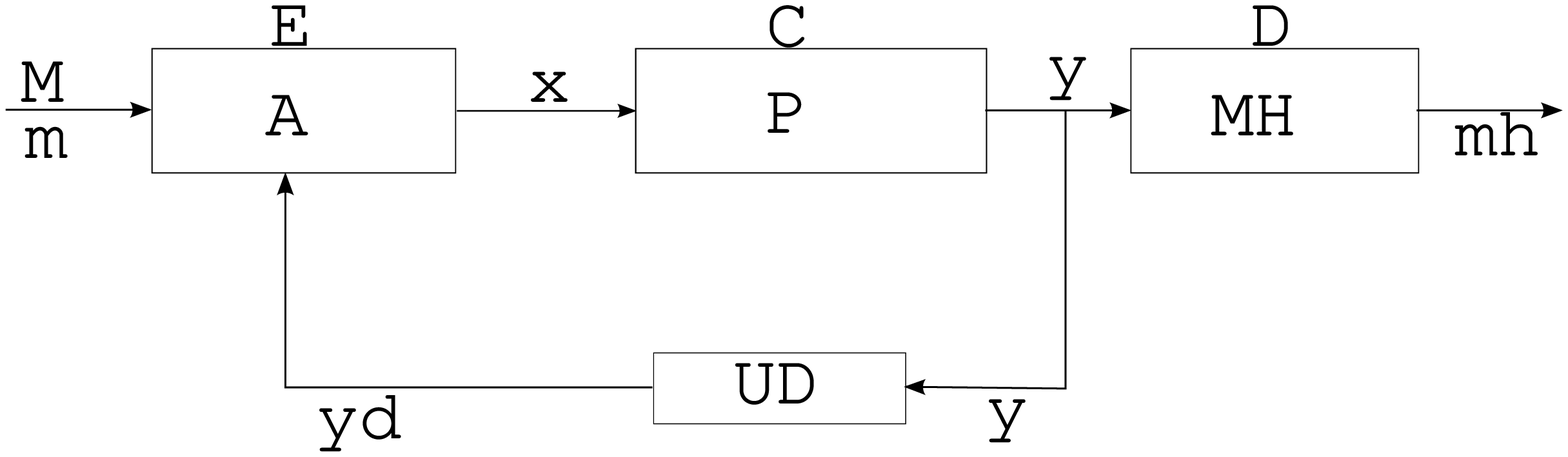}}
\caption{Communication model: a finite state channel (FSC) with
feedback and state information at the decoder and encoder.
}\label{fig.ch} }\end{figure}

An $(M,n)$ zero-error feedback code of length $n$ is defined as a
sequence of encoding mappings $x_t(m,y^{t-1},s^t)$ and a decoding
function $\hat m= g(y^n,s^{n+1})$, where a message $m$ is selected
from a set $\{1,...,M\}$. The probability of error is required to be
zero, i.e., $\Pr\left\{ g(Y^n,S^{n+1}) \neq m|\mbox{message }m
\mbox{ is sent}\right\}=0$ for all messages $m\in\{1,2,...,M\}$. We
emphasize that the size of the message set $M$ does not depend on
the initial state of the channel; hence, the probability of error
decoding needs to be zero for any initial state.

\begin{definition}\label{d_achievable_rate}
A rate $R$ is {\it achievable} if there exists an $(M,n)$ zero-error
feedback code such that $R\leq\frac{\log_2M}{n}$.
\end{definition}

\begin{definition}
The {\it operational zero-error capacity} of an FSC is defined as the supreme of all achievable rates.
\end{definition}

Throughout this paper we use the following alternative and
equivalent definition of the operational zero-error capacity.
\begin{definition}\label{def_zero_cap}
Let $M(n,s)$ be the maximum number of messages that can be
transmitted with zero error in $n$ transmissions when the initial
state of the channel is $s\in\cS$. Define
\begin{equation}
a_n=\min_{s\in\cS} \log_2
M(n,s).
\end{equation}
The {\it operational zero-error capacity} is given by:
\begin{equation}\begin{split}\label{eq.c0f}
 C_{0}\triangleq &\lim_{n\rightarrow\infty}\frac{a_n}{n}
                =\lim_{n\rightarrow\infty}\frac{\min_{s\in\cS}\log_2 M(n,s)}{n},
\end{split}\end{equation}
where the limit is shown to exist.
\end{definition}

Since the transmitter knows the state, the sequence
$\{a_n\}$ is super additive, i.e., \mbox{$a_{n+m}\geq a_n+a_m$} and
$\frac{a_n}{n}\leq |\cX|$. By Fekete's lemma \cite[Ch.
2.6]{CombinatorialOptimization_SchrijverBook}, $\lim\frac{a_n}{n}$
 exists and is equal to $\sup\frac{a_n}{n}$. Note that $R\leq
\lim\frac{a_n}{n}$ holds for any achievable rate $R$, and any rate
less than $\lim\frac{a_n}{n}$ is achievable, which are simple
consequences of Definition \ref{d_achievable_rate}. Thus,
$\lim\frac{a_n}{n}$ defines the zero-error capacity.

\subsection{Dynamic programming}
For the standard Markov decision process (MDP), we have the dynamic
programming equation
\cite{Arapos93_average_cose_survey,Bertsekas05}:
\begin{equation}\label{e_standard}
U_n(s)=\max_{a\in A(s)} \left\{r(s,a)+\sum_{s'=1}^N
P(s'|s,a)U_{n-1}(s') \right\},
\end{equation}
where $r(s,a)$ is the reward, given that we are at state $s\in
\mathcal S$, and we perform action $a\in \mathcal A$. The term
$U_n(s)$ is the total reward after $n$ steps (a.k.a.  the
"reward-to-go" in $n$ steps) when we start at time $s$. The
conditional distribution $P(s'|s,a)$ is the probability of the next
state $s'\in \mathcal S$, given the current state $s\in \mathcal S$
and action $a\in \mathcal A$.

The dynamic programming equation that is associated in this paper
with the zero-error capacity has the form
\begin{equation}\label{e_nonstandard}
U_n(s)=\max_{a\in A(s)} \min_{s'\in \mathcal S(a)}
\left\{r(s,a,s')+U_{n-1}(s') \right\},
\end{equation}
where $r(s,a,s')$ is the reward, given the current state $s$, the
action $a$ and the next state $s'$. The reward may be any real
number, including $\pm \infty.$ The value $U_n(s)$ is defined as
before, i.e., the total reward in $n$ steps when starting at state
$s$.

The DP equation in (\ref{e_nonstandard}) may be viewed as a
stochastic game \cite{Shapley1953StochasticGame}, which is a.k.a
competitive MDP \cite{Filar97CompetitveMDP}, in which there are two
asymmetric players. Player 1, the leader, takes an action
$a\in\mathcal A(s)$, which may depend on the current state and
Player 2, the follower, determines the next state $s'\in\mathcal S$.
Player 2 sees the state of the game $s$ and the action of player 1.
In the zero-error capacity problem, Player 1 would be the user who
designs the code to maximize the transmitted rate, and Player 2
would be Nature, which chooses the  next state  to minimize the
transmitted rate.

\section{A sufficient and necessary condition for
$C_0=0$}\label{s_sufficient_C=0} Shannon \cite{shannon56} showed
that for a DMC, which is an FSC with only one state, if any two
input letters have at least one common output, it is impossible to
distinguish between two messages with zero-error. Using
finite-horizon dynamic programming, we derive in this section a
sufficient and necessary condition for an FSC to have $C_0=0$, i.e.,
the zero-error capacity is zero.

%

\begin{definition}
Two input letters $x_1$ and $x_2$ are called {\it adjacent} at state $s$ if there exists an output letter $y$ and a state $s'$ such that $p(y,s'|x_1,s)>0$ and $p(y,s'|x_2,s)>0$.
 \end{definition}
 \begin{definition}
 A state $s$ is {\it positive} if there exist two input letters that are not {\it adjacent} at state $s$.
 \end{definition}
 The intuition behind the result in this section is that if the
channel  undergoes only non-{\it positive} states during the
transmission, we cannot distinguish between two messages based on
the output sequence and the channel state sequence, since they could
result from either message.

To determine whether $C_0=0$, we form the following dynamic
programming equation,
\begin{equation}\label{eq.cadd}
V_n(s)=r(s)+\max_{x\in \cX}\min_{s'\in \mathcal S(s,x)}V_{n-1}(s'),
\end{equation}
where $V_0(s)=0,\forall s\in\cS$, $\cS(s,x)=\{s': p(y,s'|x,s)>0
\text{ for some } y\in \cY  \}$, and reward $r(s)=1$ if state $s$ is
{\it positive}, while $r(s)=0$ if state $s$ is not {\it positive}.
\begin{lemma} \label{l_mono}(\it monotonicity of $V_n(s)$.)
The total reward $V_n(s)$ is non-negative and    non-decreasing in
$n$, i.e.,
\begin{eqnarray}
0\leq V_n(s)\leq V_{n+1}(s), \forall n=1,2,3,...  \mbox{ and } s\in
\mathcal S.
\end{eqnarray}
\end{lemma}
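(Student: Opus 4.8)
The plan is to prove both inequalities in the chain $0 \leq V_n(s) \leq V_{n+1}(s)$ by induction on $n$, exploiting the structure of the recursion \eqref{eq.cadd} together with the nonnegativity of the per-step reward $r(s)$.

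First I would establish nonnegativity, i.e. $V_n(s) \geq 0$ for all $n$ and all $s$. This follows by a straightforward induction. The base case $V_0(s) = 0$ holds by definition, and if $V_{n-1}(s') \geq 0$ for every $s'$, then the inner minimization $\min_{s' \in \mathcal S(s,x)} V_{n-1}(s')$ is nonnegative, the outer maximization over $x$ preserves nonnegativity, and adding $r(s) \geq 0$ keeps the total nonnegative. Thus $V_n(s) \geq 0$.

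Next I would prove monotonicity, $V_n(s) \leq V_{n+1}(s)$, again by induction on $n$. The base case is $V_0(s) = 0 \leq V_1(s)$, which is immediate from nonnegativity just established. For the inductive step, assume $V_{n-1}(s') \leq V_n(s')$ for all $s' \in \mathcal S$. Comparing the recursions for $V_n(s)$ and $V_{n+1}(s)$, both share the same additive term $r(s)$ and the same action set $\cX$ and transition sets $\mathcal S(s,x)$; they differ only in that one uses $V_{n-1}$ and the other uses $V_n$ inside the $\max$-$\min$. The key observation is that the operator $F[W](s) \triangleq r(s) + \max_{x \in \cX} \min_{s' \in \mathcal S(s,x)} W(s')$ is \emph{monotone}: if $W_1(s') \leq W_2(s')$ pointwise, then $F[W_1](s) \leq F[W_2](s)$, because both $\min$ and $\max$ preserve pointwise inequalities. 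Applying this with $W_1 = V_{n-1}$ and $W_2 = V_n$ yields $V_n(s) = F[V_{n-1}](s) \leq F[V_n](s) = V_{n+1}(s)$, completing the induction.

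The main obstacle, such as it is, lies in carefully justifying the monotonicity of the combined $\max$-$\min$ operator rather than in any deep estimate; the argument is essentially bookkeeping once one notes that the index sets $\cX$ and $\mathcal S(s,x)$ are identical across the two recursions, so the comparison reduces to a pointwise inequality propagating through $\min$ and then through $\max$. One subtlety worth noting is the possibility of infinite rewards mentioned for the general equation \eqref{e_nonstandard}, but here $r(s) \in \{0,1\}$ is bounded and the monotonicity of $\min$ and $\max$ holds regardless, so no difficulty arises from that direction.
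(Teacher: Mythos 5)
Your proof is correct and follows essentially the same route as the paper: the paper's auxiliary sequence $\tilde V_n$ with $\tilde V_0 \geq V_0$ is just its way of packaging the same monotonicity of the operator $F[W](s) = r(s) + \max_{x}\min_{s'} W(s')$ that you state explicitly, applied with initial condition $V_1 \geq V_0 = 0$. No gaps; your version merely spells out the base cases and the $\max$-$\min$ monotonicity in more detail.
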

\begin{proof}
Let $\tilde V_n(s)=r(s)+\max_{x\in \cX}\min_{s'\in \mathcal
S(s,x)}\tilde V_{n-1}(s')$ and $\tilde V_0(s)\geq V_0(s),\; \forall
s\in \mathcal S.$ Then, by induction, we have $\tilde V_n(s)\geq
V_n(s),\; \forall n=1,2,3,...  \mbox{ and } s\in \mathcal S .$ Since
$r(s)\geq0, \forall s\in \mathcal S$, then $V_1(s)\geq 0$. Let us
define $\tilde V_0(s)=V_1(s)$. Since $\tilde V_0(s)\geq V_0(s)$, we
obtain that $\tilde V_n(s)\geq V_n(s)$, which means that
$V_{n+1}(s)\geq V_n(s)\geq 0$ .
\end{proof}

\begin{table}[h]
\caption{Interpretation of The DP given in (\ref{eq.cadd}),  which
corresponds to determining whether $C_0>0$.} \centering
\label{ta_summary_dp}

\begin{tabular}{|l|l|}

\hline  The DP given in (\ref{eq.cadd})& Interpretation of the DP\\
\hline \hline

state $s$ of the DP &  state $s$ of the channel\\
\hline
reward $r(s)$=1 & state $s$ is positive; at least one bit can be transmitted error-free\\
\hline
reward $r(s)$=0 & state $s$ is non positive; no bits can be transmitted error-free\\
\hline

Player 1 takes action $x$ in order to & encoder chooses input $x$ in order to\\
 maximize the reward of the DP& maximize the number of positive states visited\\ \hline
Player 2 chooses next state in order to&  Nature chooses next state and output to\\
minimize the reward of the DP& minimize the number of messages
transmitted\\\hline

\hline

$V_n(s)$- total reward in $n$ rounds, & number of positive states visited in $n$ \\
starting the game from state $s$ & usages of the channel starting at state $s$,\\
\hline \hline

\end{tabular}
\end{table}

This DP can be viewed as a two-person game, where $V_n(s)$ is the
game result after $n$ steps starting with initial state  $s$. Player
1 chooses the input letter $x$, and Player 2 chooses the next state
$s'$. Both players know the current state $s$, and the reward of the
game is a function only of the current state only, i.e., $r(s)$.
Player 1 makes the first play, and the two players make alternative
plays thereafter. The goal of Player 1 is to maximize the number of
times the channel visits a {\it positive} state, and Player 2 tries
to minimize it. The interpretation of the DP as a stochastic game
between the user and Nature is summarized in Table
\ref{ta_summary_dp}.

The following lemma states that if the total reward of the
stochastic game is zero after $n$ rounds with initial state $s$,
i.e., $V_n(s)=0$, then only one message can be sent error-free
through $n$ uses of the channel with initial state $s$.

\begin{lemma}\label{lemma.vc}
$V_n(s)=0$ implies $M(n,s)=1$ and $V_n(s)>0$ implies $M(n,s)>1.$
\end{lemma}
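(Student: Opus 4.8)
The plan is to prove the two implications simultaneously by induction on $n$, exploiting the fact that they are complementary. By \lemref{l_mono} we have $V_n(s)\geq 0$, and trivially $M(n,s)\geq 1$, so it suffices to establish the equivalence $V_n(s)=0 \iff M(n,s)=1$. The base case $n=1$ is immediate: $V_1(s)=r(s)$, which is positive exactly when $s$ is \emph{positive}, i.e.\ exactly when there are two non-adjacent input letters; two non-adjacent letters produce disjoint $(y,s')$ supports and hence let the decoder separate two messages in a single use, so $M(1,s)\geq 2$. Conversely, if $s$ is non-\emph{positive}, every pair of inputs shares an output--state pair, so in one use no two messages can be told apart and $M(1,s)=1$. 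The inductive hypothesis is that the equivalence holds for $n-1$ and all states.

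For the direction $V_n(s)>0 \Rightarrow M(n,s)>1$ (achievability) I would split on the reward. If $r(s)=1$ the state is \emph{positive}, and as in the base case two non-adjacent letters already give $M(n,s)\geq 2$. If $r(s)=0$ but $V_n(s)>0$, then $\max_{x\in\cX}\min_{s'\in\cS(s,x)}V_{n-1}(s')>0$, so there is an input $x^\ast$ with $V_{n-1}(s')>0$ for \emph{every} $s'\in\cS(s,x^\ast)$. Transmit $x^\ast$ on the first use irrespective of the message; whatever state $s'$ is realized satisfies $V_{n-1}(s')>0$, hence $M(n-1,s')\geq 2$ by the inductive hypothesis. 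Since the decoder observes $s'$, it can apply the length-$(n-1)$ code attached to that particular $s'$, and the two messages are separated. This yields $M(n,s)\geq 2$.

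For the converse $V_n(s)=0 \Rightarrow M(n,s)=1$ I would argue that any two messages are \emph{confusable}, i.e.\ admit an output--state sequence of positive probability under both, so that the decoder cannot distinguish them. From $V_n(s)=0$ together with \lemref{l_mono} I read off two facts: first, $r(s)=0$, so $s$ is non-\emph{positive} and the first inputs $x_1,x_2$ of any two messages are adjacent, producing a common first observation $(y_1,s_2)$; second, $\min_{s'\in\cS(s,x)}V_{n-1}(s')=0$ for every $x\in\cX$, so every input possesses a zero-value successor. The plan is to pick a common first observation $(y_1,s_2)$ at which $V_{n-1}(s_2)=0$, invoke the inductive hypothesis on the continuations of the two messages from $s_2$ to obtain a common tail, and then concatenate the two pieces into a full common observation path.

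The main obstacle is precisely this last selection. Adjacency supplies a \emph{common} successor $s_2$, and the value recursion supplies a \emph{zero-value} successor for each individual input, but a priori these two successors need not coincide, so the reduction to the inductive hypothesis is not automatic. I expect the real work of the proof to lie in reconciling them: showing that at a non-\emph{positive} state with $V_n(s)=0$ both messages can always be routed through a single next state that is simultaneously reachable from $x_1$ and $x_2$ and of zero value (equivalently, that Nature has one confounding policy that does not depend on which message was sent). The sub-case $x_1=x_2$ is easy, since then every reachable successor is common and one simply selects a zero-value one; the genuine difficulty is $x_1\neq x_2$, and this is the step on which I would concentrate the analysis.
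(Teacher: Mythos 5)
Your base case and the achievability direction ($V_n(s)>0 \Rightarrow M(n,s)>1$) are correct, and they are essentially a rigorous inductive rendering of what the paper sketches: play the maximizing input $x^\ast$ regardless of the message, and since both terminals observe the realized next state $s'$ (which satisfies $V_{n-1}(s')>0$), they can switch to a two-message zero-error code of length $n-1$ for that state. The problem is the converse direction, which is exactly where your proposal stops. You reduce $V_n(s)=0 \Rightarrow M(n,s)=1$ to the claim that at a non-positive state $s$ with $V_n(s)=0$, any two distinct (hence adjacent) first inputs $x_1,x_2$ admit a \emph{common} successor pair $(y_1,s_2)$ that is simultaneously of \emph{zero value}, $V_{n-1}(s_2)=0$, and you acknowledge that you cannot prove this. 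That is a genuine gap, not a technicality: without a common zero-value successor the inductive hypothesis cannot be invoked, and no observation sequence consistent with both messages is ever constructed.

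Moreover---and this is worth knowing---the missing step cannot be filled, because the claim it needs is false; your instinct that this is ``the genuine difficulty'' is exactly right. Adjacency only guarantees \emph{some} common pair $(y,s_2)$; nothing prevents every common successor from having positive value while each input separately has its own, non-common, zero-value successor. Concretely, let $\cX=\cY=\{0,1\}$ and take states $\{s_0,A,B,C\}$: at $s_0$, input $0$ yields $(y{=}0,A)$ or $(y{=}1,B)$ with probability $1/2$ each, and input $1$ yields $(y{=}0,A)$ or $(y{=}1,C)$ with probability $1/2$ each; at $A$, input $x$ yields output $x$ and the state remains $A$ (so $A$ is positive); at $B$ and at $C$, both inputs yield output $0$ and the state stays put (so $V_k(B)=V_k(C)=0$ for all $k$). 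Then $s_0$ is non-positive (the two inputs share $(0,A)$) and $V_n(s_0)=0$ for every $n$, since Nature can answer input $0$ with $B$ and input $1$ with $C$. Yet $M(2,s_0)\geq 2$: send the message bit as the first input; if the realized state is $B$ or $C$, the decoder, who sees the state sequence, already knows the bit, and if it is $A$, the second (noiseless) use separates the two messages. So the implication $V_n(s)=0 \Rightarrow M(n,s)=1$ of Lemma~\ref{lemma.vc} fails here, and the paper's own proof breaks at precisely the point you flagged: its key assertion, that sending two or more messages requires visiting a positive state with probability one, is refuted by this example, because the divergence of the state paths can itself convey the message even when, with positive probability, no positive state is ever visited. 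The root cause is that in the game~(\ref{eq.cadd}) Nature may adapt its choice of next state to the transmitted input, whereas confusability of two messages requires a \emph{single} observation sequence having positive probability under \emph{both} messages; these are not the same requirement.
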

\begin{proof}
First, we observe that so as to send two or more  messages in $n$
uses of the channel, a positive state should be visited with
probability one. Once a positive state is visited, we can use two
inputs that are not adjacent to transmit without error one bit (two
messages). If a positive state is not visited, then there are no two
inputs that can distinguish between two messages.

The stochastic game given in (\ref{eq.cadd}) verifies  whether a
positive state is visited with probability 1.  In the stochastic
game, the rewards $r(s)=1$  and $r(s)=0$ indicate that state $s$ is
positive and non-positive, respectively. Player 1 is the encoder
which wants to visit a positive state and Player 2 is Nature which
chooses the output and the state such that a positive state will not
be visited. A total reward $V_n(s)=0$ implies that in $n$
transmissions with initial state $s$, with positive probability, the
channel undergoes only non-{positive} states, regardless of the
inputs. Thus $V_n(s)=0$ implies $M(n,s)=1$.
\end{proof}

According to Lemma \ref{l_mono}, $V_n(s)$ is non-negative and
non-decreasing in $n$ for any $s\in\mathcal S$. Thus,
$\min_{s\in\cS} V_n(s)$ is also nondecreasing in $n$, and therefore
$\lim_{n\rightarrow\infty}\min_{s\in\cS} V_n(s)$ is well defined (it
may also be infinite). If $\lim_{n\rightarrow\infty}\min_{s\in\cS}
V_n(s)=0$, then $\min_{s\in\cS} V_n(s)=0, \forall n$ and invoking
Lemma \ref{lemma.vc}, $ \min_{s\in\cS}M(n,s)=1$, which gives $C_0=0$
by definition.
%
The next lemma states that to verify whether
$\lim_{n\rightarrow\infty}\min_{s\in\cS} V_n(s)>0$, it is enough to
calculate a finite-horizon problem.

\begin{lemma}\label{lemma.v}
  \begin{equation*}
  \lim_{n\rightarrow\infty}\min_{s\in\cS} V_n(s)=0 \Longleftrightarrow \min_{s\in\cS} V_{|\cS|}(s)=0
  \end{equation*}
\end{lemma}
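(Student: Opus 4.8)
The plan is to prove the two implications separately, with the ``$\Leftarrow$'' direction being immediate and the ``$\Rightarrow$'' direction carrying the real content. For the easy direction, suppose $\min_{s\in\cS} V_{|\cS|}(s)=0$. Since $V_n(s)$ is non-decreasing in $n$ by Lemma~\ref{l_mono}, the quantity $\min_{s\in\cS} V_n(s)$ is non-decreasing in $n$, so its limit exists and equals $\sup_n \min_{s\in\cS} V_n(s)$. But I would establish the converse-style fact that $\min_{s} V_n(s)=0$ for all $n$ once it holds at $n=|\cS|$, which is exactly what the harder direction delivers in contrapositive form; hence it is cleanest to focus on showing that if $\min_s V_n(s)>0$ for some $n$, then already $\min_s V_{|\cS|}(s)>0$.

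So for the main direction I would argue the contrapositive of ``$\Rightarrow$'': assume $\lim_{n\to\infty}\min_s V_n(s)>0$, equivalently (by monotonicity) $\min_s V_N(s)>0$ for some finite horizon $N$, and show $\min_s V_{|\cS|}(s)>0$. The key structural observation is that a total reward of $0$ means Player~2 (Nature) has a strategy forcing the game to stay in non-positive states forever, regardless of Player~1's inputs. I would make this precise by defining the set of ``dead'' states $D=\{s : \lim_{n}V_n(s)=0\}$, equivalently the states from which, for every horizon, Player~1 cannot force a visit to a positive state. The plan is to show that $s\in D$ if and only if $s$ is non-positive \emph{and} for every input $x\in\cX$ there exists a reachable next state $s'\in\cS(s,x)$ with $s'\in D$ --- that is, $D$ is the largest set of non-positive states closed under Nature's ability to keep the play inside $D$.

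The heart of the argument is a pigeonhole/reachability bound tying the horizon $|\cS|$ to the number of states. The claim is: if $V_{|\cS|}(s)=0$, then in fact $V_n(s)=0$ for all $n$, because a reward-free play of length $|\cS|$ can be extended indefinitely. Concretely, when $V_{|\cS|}(s)=0$, Nature has a strategy guaranteeing non-positive states for $|\cS|$ steps against any Player~1 strategy; along any such play of length $|\cS|$ the sequence of visited states has $|\cS|+1$ entries drawn from $\cS$, so by pigeonhole some state repeats, exposing a cycle of non-positive states that Nature can force and loop on forever. Formally, I would show by induction that the value $V_n(s)$ being positive is detected no later than horizon $|\cS|$: if $V_n(s)>0$ for some $n>|\cS|$, then tracing Player~1's optimal (forcing) strategy and contracting any repeated state yields a shorter forcing strategy, so in fact $V_{|\cS|}(s)>0$ already. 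This repeated-state contraction is the step I expect to be the main obstacle, since one must verify that deleting a cycle from a \emph{minimax} strategy (not merely a path) still yields a valid strategy that forces a positive state within $|\cS|$ steps; the care needed is that Player~2's responses must remain consistent after the contraction.

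Once the contraction argument gives $\min_s V_{|\cS|}(s)>0 \Rightarrow \min_s V_n(s)>0$ for all $n\geq|\cS|$, monotonicity upgrades this to $\lim_{n\to\infty}\min_s V_n(s)>0$, completing the contrapositive and hence the nontrivial direction; combined with the trivial direction this establishes the stated equivalence.
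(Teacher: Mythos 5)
Your skeleton is sound: the direction $\lim_n \min_s V_n(s)=0 \Rightarrow \min_s V_{|\cS|}(s)=0$ is immediate from Lemma~\ref{l_mono}, and the real content is showing that positivity of the value at some finite horizon is already detected at horizon $|\cS|$ (equivalently, that $\min_s V_{|\cS|}(s)=0$ forces $\min_s V_n(s)=0$ for all $n$). Your set $D$ of dead states, with its characterization as the largest set of non-positive states closed under Nature's ability to keep play inside it, is exactly the right object---it is, in effect, the stable set that the paper's proof constructs. The gap is in the step you yourself flag as the main obstacle, and it is a genuine one, not a matter of polish. Pigeonholing states along a single play does not give Nature a way to ``loop on the cycle forever'': Nature's finite-horizon strategy is history- and time-dependent, so at the second visit to the repeated state Player~1 may choose a different input than at the first visit, and Nature's strategy is only defined for $|\cS|$ steps---there is nothing to replay. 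Symmetrically, ``contracting a repeated state'' in Player~1's forcing strategy is not an operation on a path but on the whole strategy tree (one subtree per response of Player~2), and you give no rule for performing the contraction consistently across branches, nor an argument that it terminates with all branches of depth at most $|\cS|$. As written, the heart of the proof is missing.

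The fix---and the paper's actual route---is to apply the pigeonhole to the cardinalities of the level sets rather than to states along plays. Define $\cS_n=\{s\in\cS: V_n(s)=0\}$. These sets are nested decreasing with $\cS_0=\cS$, so if $\cS_{|\cS|}\neq\emptyset$ there must exist $n^*\leq|\cS|-1$ with $\cS_{n^*}=\cS_{n^*+1}$ and $\cS_{n^*}$ non-empty. The point of stabilization is that it hands Nature a \emph{stationary} strategy: for $s\in\cS_{n^*+1}$ and any $x\in\cX$, the recursion (\ref{eq.cadd}) gives some $s'=A_2(s,x)\in\mathcal{S}(s,x)$ with $V_{n^*}(s')=0$, i.e., $A_2$ maps $\cS_{n^*+1}\times\cX$ back into $\cS_{n^*+1}$; then a one-line induction, $V_{n^*+2}(s)\leq \max_{x\in\cX} V_{n^*+1}(A_2(s,x))=0$ (using $r(s)=0$), shows $\cS_n=\cS_{n^*}$ for all $n\geq n^*$, hence $\min_s V_n(s)=0$ for every $n$. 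This stationarity is precisely what your play-based pigeonhole cannot supply. If you replace the contraction step by this stabilization argument---that is, actually prove your claimed fixpoint characterization of $D$ and show the fixpoint is reached within $|\cS|$ iterations---your proof closes and essentially coincides with the paper's.
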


\begin{proof}
The $\Longrightarrow$ direction follows from Lemma \ref{l_mono},
which states that for any $s\in \mathcal S$,  $V_n(s)$ is a
non-negative and non-decreasing function in $n$.

Now we prove the $\Longleftarrow$ direction. 
Define $\cS_n$, the set of initial states for which the reward is
zero after $n$ rounds of the stochastic game, i.e.,
$\cS_n=\{s\in\cS: V_n(s)=0\}$. Note that
$\cS_{n+1}\subseteq\cS_{n}$, $\cS_0=\cS$ and $\cS_1=\{s\in\cS:
r(s)=0\}$.


First, we claim that there exists $n^*$, $0\leq n^* \leq |\cS|-1$,
for which $\cS_{n^*}=\cS_{n^*+1}$ must hold, where $\cS_{n^*}$ is
non-empty. Otherwise $\cS_{n+1}$ has at least one less element than
$\cS_{n}$ for $0\leq n \leq |\cS|-1$, and therefore
$\cS_{|S|}=\emptyset$. If $\cS_{|S|}$ is empty, it means that
$\min_{s\in\cS} V_{|\cS|}(s)>0$, which contradicts  our assumption.

The equality between $\cS_{n^*}$ and $\cS_{n^*+1}$ means that when
the channel starts at some $s\in \cS_{n^*+1}$, for any input letter
$x$, there exists an action of Player 2 such that the next state
$s'$ would satisfy $s'\in \cS_{n^*}$. Define this strategy of Player
2 as a function $A_2(\cdot,\cdot):\cS_{n^*} \times \mathcal X
\mapsto \cS_{n^*}$, namely, given $s\in \mathcal S_{n^*}$, and any
input  $x\in \mathcal X$,  the next step $s'$ depends on $s$ and $x$
by the function $A_2(s,x)$ such that $s'=A_2(s,x)$. We claim that
$\cS_{n^*+k}=\cS_{n^*},\forall k\geq0$, i.e., once the set $\cS_n$
stops shrinking, it will stay the same. To prove this, let us fix an
arbitrary $s\in \cS_{n^*+1}$. Since $S_1\subseteq S_n$,  $s\in
\cS_1$ and $r(s)=0$. We have
\begin{equation}\begin{split}
V_{n^*+2}(s)&=r(s)+\max_{x\in \cX}\min_{s'\in \mathcal S(s,x)}V_{n^*+1}(s')\\
          &=\max_{x\in \cX}\min_{s'\in \mathcal S(s,x)}V_{n^*+1}(s')\\
      &\leq \max_{x\in \cX}V_{n^*+1}(A_2(s,x))\\
      &=0
\end{split}
\end{equation}
Therefore $\cS_{n^*+2}=\cS_{n^*+1}$. Repeating the same argument, we
have $\cS_{n^*+k}=\cS_{n^*}, \forall k>0$, which means that
$V_n(s^*)=0, \forall n$. This completes the proof.
\end{proof}

The following theorem state the necessary and sufficient condition for $C_0=0$ through the stochastic game.
\begin{theorem}\label{theorem.C=0}
The zero -error capacity is positive if and only if the total reward
$\min_{s\in\cS} V_{|\cS|}(s)$ is positive, i.e.,
\begin{equation}
\min_{s\in\cS} V_{|\cS|}(s)=0 \Longleftrightarrow C_0=0.
\end{equation}
\end{theorem}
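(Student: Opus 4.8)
The plan is to establish the equivalence by treating each implication separately and assembling the three preceding lemmas into a single chain that links the finite-horizon value $V_{|\cS|}$, the message counts $M(n,s)$, and the capacity $C_0$. The engine of the argument is \lemref{lemma.vc}, which translates the algebraic condition $V_n(s)=0$ into the combinatorial statement $M(n,s)=1$, together with \lemref{lemma.v}, which collapses the infinite-horizon condition down to the single finite horizon $n=|\cS|$. Both directions then reduce to unwinding the definition of $C_0$.

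For the forward direction I would assume $\min_{s\in\cS} V_{|\cS|}(s)=0$. By \lemref{lemma.v} this is equivalent to $\lim_{n\rightarrow\infty}\min_{s\in\cS} V_n(s)=0$. Since \lemref{l_mono} guarantees that each $V_n(s)$, and hence $\min_{s\in\cS} V_n(s)$, is non-negative and non-decreasing in $n$, a zero limit forces $\min_{s\in\cS} V_n(s)=0$ for every $n$. For each $n$ I pick a minimizing state $s_n$; then $V_n(s_n)=0$, and \lemref{lemma.vc} yields $M(n,s_n)=1$, so $\min_{s\in\cS} M(n,s)=1$ and hence $a_n=\log_2 1=0$ for all $n$. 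Substituting into $C_0=\lim_{n\rightarrow\infty} a_n/n$ gives $C_0=0$.

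For the reverse direction I would argue by contrapositive, proving $\min_{s\in\cS} V_{|\cS|}(s)>0 \Rightarrow C_0>0$. The hypothesis means $V_{|\cS|}(s)>0$ for every $s\in\cS$, so \lemref{lemma.vc} gives $M(|\cS|,s)>1$, i.e. $M(|\cS|,s)\geq 2$ since $M$ is a positive integer, for all $s$. Hence $a_{|\cS|}=\min_{s\in\cS}\log_2 M(|\cS|,s)\geq 1$. The super-additivity of $\{a_n\}$ noted earlier (arising from the transmitter's knowledge of the state), together with Fekete's lemma, gives $C_0=\sup_n a_n/n \geq a_{|\cS|}/|\cS| \geq 1/|\cS|>0$, completing the contrapositive.

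Because the three lemmas already carry the substantive content, the remaining work is largely bookkeeping: confirming that monotonicity of $\min_{s\in\cS} V_n(s)$ is inherited from the per-state statement of \lemref{l_mono}, and that for each $n$ a minimizing state $s_n$ may be selected before invoking \lemref{lemma.vc}. The one genuinely load-bearing step is the reverse direction, where a single positive finite-horizon value $a_{|\cS|}\geq 1$ must be upgraded to a strictly positive capacity; this is exactly where the super-additivity $a_{n+m}\geq a_n+a_m$ and the resulting identity $C_0=\sup_n a_n/n$ are indispensable, since $a_{|\cS|}>0$ on its own would not preclude $C_0=0$.
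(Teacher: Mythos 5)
Your proposal is correct and follows essentially the same route as the paper's own proof: the forward direction chains Lemma~\ref{lemma.v}, monotonicity, and Lemma~\ref{lemma.vc} to get $\min_{s\in\cS}M(n,s)=1$ for all $n$, and the reverse direction uses Lemma~\ref{lemma.vc} to get $\min_{s\in\cS}M(|\cS|,s)\geq 2$ and hence $C_0\geq 1/|\cS|$. The only difference is that you spell out the steps the paper leaves implicit, in particular that the bound $C_0\geq a_{|\cS|}/|\cS|$ rests on the super-additivity of $\{a_n\}$ and Fekete's lemma (i.e., $C_0=\sup_n a_n/n$), which the paper subsumes under ``the definition of zero-error capacity.''
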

\begin{proof}

If $\min_{s\in\cS} V_{|\cS|}(s)=0$, then according to Lemma
\ref{lemma.v} $\lim_{n\rightarrow\infty}\min_{s\in\cS} V_n(s)=0$,
and following Lemma \ref{lemma.vc} it follows that $\min_s M(n,s)=1$
for any $n$; hence $C_0=0$.

If $\min_{s\in\cS} V_{|\cS|}(s)>0$, then in according to Lemma
\ref{lemma.vc},  $\min_{s\in\cS} M(|\cS|,s)\geq 2$, and following
from the definition of zero-error capacity $C_0\geq
\frac{1}{|\mathcal S|}.$
\end{proof}
%
%
%

\section{The Dynamic Programming Problem associated with the
Channel}\label{s_DP_associated} In this section, we define a dynamic
programming problem associated with the channel. The solution to the
problem is later used to determine the feedback capacity of the
channel.

Denote $G(y,s^\prime|s)=\{x: x\in \cX, p(y,s^\prime|x,s)>0\}$, i.e.,
$G(y,s^\prime|s)$ is the set of input letters at state $s$ that can
drive the channel state to $s^\prime$ while yielding an output
letter $y$ with positive probability. Denote $W(\cdot,\cdot)$ as a
mapping $\bbZ^+\times \cS\mapsto \bbR^+$. Set $W(0,s)=1,\forall
s\in\cS$ as the initial value. Denote $P_{X|S}(\cdot|\cdot)$ as a
mapping $\cX\times\cS\mapsto\bbR^+$ such that for each $s\in\cS$,
$P_{X|S}(\cdot|s)$ is a probability mass function (pmf) on $\cX$,
i.e., $\sum_{x\in\cX}P_{X|S}(x|s)=1$, and $P_{X|S}(x|s)\geq0,\forall
x\in \cX$. The term $W(\cdot,\cdot)$ is the solution to the problem
defined iteratively by:
\begin{equation}\begin{split}\label{eq.dp} &W(n,s)=
\max_{P_{X|S}(\cdot|s)}\min_{s^\prime\in\cS}\left\{W(n-1,s^\prime)\left[\max_{y\in \cY }\sum_{x\in G(y,s^\prime|s)}P_{X|S}(x|s)\right]^{-1}\right\}\\
&\;\;\;\;\;\;\;\;\;\;\;\;\;\;\;\;\;\;\;\;\;\;\;\;\;\;\;\;\;\;\;\;\;\;\;\;\;\;\;\;\;\;\;\;\;\;\;\;\;\;\;\;\;\;
\;\;\;\;\;\;\;\;\;\;\;\;\;\;\;\;\;\;\;\;\;\;\;\;\;\;\;\forall s \in \cS,\text{ and for } n=1,2,3, ...
\end{split}\end{equation}
We adopt the convention that $\frac{1}{0}=\infty$, and, if
$G(y,s^\prime|s)=\emptyset$, $\sum_{x\in
G(y,s^\prime|s)}P_{X|S}(x|s)=0$. One property that can be verified
from the definition and the initial value is that $\forall n\geq 0$,
$\forall s\in \mathcal S$, $W(n,s)\geq 1$.

The main result of this paper is the following theorem:
\begin{theorem}\label{theorem.main}
If $\min_{s\in\cS} V_{|\cS|}(s)>0$,
\begin{equation}\begin{split}\label{eq.c0main}
C_0=& \liminf_{n\rightarrow\infty}\frac{1}{n}\min_{s\in\cS}\log_2W(n,s);
\end{split}\end{equation}

Otherwise $C_0 = 0$.
\end{theorem}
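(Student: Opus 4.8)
The plan is to treat the two regimes of the statement separately. The ``otherwise'' case is immediate: if $\min_{s\in\cS}V_{|\cS|}(s)=0$ then Theorem~\ref{theorem.C=0} gives $C_0=0$, which is the claimed value. So I assume $\min_{s\in\cS}V_{|\cS|}(s)>0$; by the same theorem this guarantees $C_0\geq 1/|\cS|>0$, a fact I use later. Writing $L\triangleq\liminf_{n\to\infty}\frac1n\min_{s\in\cS}\log_2 W(n,s)$, I would prove $C_0=L$ by establishing $C_0\leq L$ (converse) and $C_0\geq L$ (direct). The converse is the routine direction; the achievability is where the real work lies.

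For the converse I would show $M(n,s)\leq W(n,s)$ for every $n$ and $s$ by induction on $n$, with base case $M(0,s)=1=W(0,s)$. Given an optimal zero-error code with $M(n,s)$ messages and initial state $s$, the first transmission is a map $m\mapsto x_1(m)$, which induces an input pmf $P_{X|S}(\cdot|s)$ through the fractions $n_x/M(n,s)$, where $n_x=|\{m:x_1(m)=x\}|$. For each next state $s'$ and output $y$ of positive probability, the messages still consistent with $(y,s')$ are exactly those with $x_1(m)\in G(y,s'|s)$; there are $M(n,s)\sum_{x\in G(y,s'|s)}P_{X|S}(x|s)$ of them, and, since the decoder knows $s'$, they must be separable in the remaining $n-1$ uses from $s'$, so their number is at most $M(n-1,s')$. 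Taking the worst $y$ and the binding $s'$ yields $M(n,s)\leq\min_{s'\in\cS}M(n-1,s')\big[\max_{y}\sum_{x\in G(y,s'|s)}P_{X|S}(x|s)\big]^{-1}$, and the induction hypothesis together with definition (\ref{eq.dp}) of $W$ gives $M(n,s)\leq W(n,s)$. Hence $\min_s\log_2 M(n,s)\leq\min_s\log_2 W(n,s)$ for all $n$, and passing to the limit gives $C_0=\lim_n\frac1n\min_s\log_2 M(n,s)\leq L$.

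For the direct part I would fix a \emph{stationary} input rule $P_{X|S}$ whose growth rate $\tilde\lambda\triangleq\liminf_{j}\frac1j\min_s\log_2 W_{P}(j,s)$ is positive and as close to $L$ as desired, where $W_{P}$ is the value of (\ref{eq.dp}) with the maximization frozen to this rule. At each use I partition the surviving messages among the inputs in the proportions $P_{X|S}(\cdot|s)$ (integer floors), transmit, and---using the feedback and the known state---retain only the messages consistent with the observed $(y,s')$. Unrolling the recursion along the state path shows that after $k$ uses the worst-case number of surviving messages is at most $M/W_{P}(k,s)+c$, with $c\triangleq|\cX|\sum_{j\geq0}\big(\min_{s}W_{P}(j,s)\big)^{-1}$. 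I run this ``bulk'' phase until the surviving set has constant size $T\triangleq\lceil c\rceil+1$, then resolve those $T$ messages in a short ``endgame'': since $C_0>0$, Theorem~\ref{theorem.C=0} gives $\min_sM(|\cS|,s)\geq2$, so by super-additivity any $T$ messages separate in $O(\log T)=O(1)$ further uses from any state. The bulk phase costs about $\frac1{\tilde\lambda}\log_2 M$ uses and the endgame only $O(1)$, so the rate tends to $\tilde\lambda$, and choosing the rule near-optimal gives $C_0\geq L$.

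The crux---and the step I expect to be the main obstacle---is controlling the integer rounding. The subtle point is that one \emph{cannot} simply drive the surviving set down to a single message: each partition adds an absolute rounding term of order $|\cX|$, and the naive estimate ``surviving after $k$ steps $\leq M(f^{*})^{k}+|\cX|\sum_j (f^{*})^{j}$'' (already in the memoryless caricature, $f^{*}=\min_P\max_y\sum_{x\in G(y)}P(x)$) carries an additive constant exceeding $1$, so it never certifies unique decodability. The resolution is exactly the two-phase scheme: the constant $c$ is \emph{finite} because the chosen rule has $\tilde\lambda>0$, so $\min_sW_{P}(j,s)$ grows geometrically and the accumulated rounding stays a bounded additive quantity; one therefore halts the efficient partition at the constant level $T$ and spends only a vanishing fraction $O(1)/n$ of the block on the endgame, which preserves the rate. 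A secondary technical point---which I would either argue directly or import from the fixed-point analysis developed later in the paper---is that the supremum of the growth rates over stationary rules equals the value $L$ of the general, possibly nonstationary, program (\ref{eq.dp}); this is what lets the stationary construction approach $L$ rather than a strictly smaller rate.
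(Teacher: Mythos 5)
Your converse is the paper's converse, essentially verbatim: induction on $n$, the empirical first-letter pmf $f(\cdot|s_0)$, the separability requirement $\sum_{x\in G(y,s'|s)}u(x|s)\leq M(n-1,s')$, and the comparison with the recursion (\ref{eq.dp}). Your direct part also shares the paper's two-phase architecture (a bulk phase whose accumulated rounding error stays a bounded constant because the survivor counts shrink geometrically, followed by a constant-cost endgame that exploits $\min_{s}V_{|\cS|}(s)>0$ to reach a positive state within $|\cS|$ steps and resolve the residual messages bit by bit). But there is one genuine gap, and you flagged it yourself: your bulk phase freezes a \emph{stationary} rule $P_{X|S}$ and needs the claim that stationary rules achieve growth rates arbitrarily close to $L$. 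That claim is nowhere proved in your proposal, and it cannot be ``imported from the fixed-point analysis'': Theorem~\ref{theorem.bellman} is only a \emph{sufficient} condition---it says that \emph{if} a bounded solution $(g,\rho)$ of $g(s)+\rho=(T\circ g)(s)$ exists then $\lim_n J_n(s)/n=\rho$---and the paper explicitly remarks that for some (e.g., decomposable) chains no such solution exists. So the reduction to stationary rules is an unproven theorem about this minimax average-reward game, not a technicality.

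The paper closes exactly this hole by never reducing to stationary policies at all: the transmitter uses the \emph{horizon-dependent} maximizers $P^{(k)}_{X|S}$ of the value iteration, scheduled in reverse order (at step $j$, with $n-j+1$ uses remaining and state $s_j$, it splits the surviving messages according to $P^{(n-j+1)}_{X|S}(\cdot|s_j)$). With this schedule the survivor recursion $Z_{k+1}\leq Z_k\,J^{(n-k+1)}(s_k,s_{k+1})+|\cX|$ telescopes directly against the defining inequality $W(k,s)\leq W(k-1,s')\left[J^{(k)}(s,s')\right]^{-1}$, giving
\begin{equation}
Z_{n+1}\leq 1+|\cX|\left(1+\frac{1}{W(1,s_n)}+\cdots+\frac{1}{W(n-1,s_2)}\right),
\end{equation}
and the geometric growth $W(k,s)\geq M(k,s)\geq 2^{\epsilon k}$ (from $C_0>0$ plus the converse) bounds this by a constant $L$. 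In other words, the time-varying value-iteration policy \emph{by construction} attains the rate $\liminf_n\frac{1}{n}\min_s\log_2 W(n,s)$, so no stationary-optimality lemma is needed. If you want to keep your structure, the minimal repair is to replace ``stationary'' by ``periodic'': fix a horizon $n$ with $\min_s J_n(s)/n\geq L-\epsilon$ (possible by Lemma~\ref{lemma.Jnp} and Fekete's lemma, i.e., Theorem~\ref{t_bounds}) and cycle the maximizers $P^{(n)},\ldots,P^{(1)}$; super-additivity then certifies the growth rate of the frozen rule, which is essentially what the paper's reverse-indexed schedule accomplishes in one pass.
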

Before proving the theorem, let us verify that the zero-error
capacity of a DMC \cite[Theorem 7]{shannon56} is a special case of
Theorem \ref{theorem.main}. Since a DMC is an FSC with only one
state, $V_{|\cS|}(s)=0$ means that the state is non-{\it positive},
i.e., ``all pairs of input letters are adjacent'', as stated in
\cite[Theorem 7]{shannon56}. If $V_{|\cS|}(s)>0$, for a DMC, define
$M(n)=M(n,s)$ and $G(y)=G(y,s^\prime|s)$.
\begin{eqnarray}
M(n,s)&=& \max_{P_{X|S}(\cdot|s)}\left\{M(n-1)\left[\max_{y\in \cY }\sum_{x\in G(y)}P_{X|S}(x|s)\right]^{-1}\right\}
\nonumber \\
&=&M(n-1)\max_{P_{X}}\left[\max_{y\in \cY }\sum_{x\in
G(y)}P_{X}(x)\right]^{-1},
\end{eqnarray}
and
\begin{eqnarray}
C_0&=& \liminf_{n\rightarrow\infty}\frac{1}{n}\log_2M(n) \nonumber \\
   &=&\log_2\left[\max_{y\in \cY }\sum_{x\in G(y)}P_{X}(x)\right]^{-1},
\end{eqnarray} which is exactly the result for DMC in \cite{shannon56}.

The converse and the direct parts of Theorem \ref{theorem.main} are
proved in Section~\ref{sec.converse} and Section~\ref{sec.direct},
respectively.

\section{Converse}\label{sec.converse}
\begin{theorem}(\it Converse.) \label{theorem.converse} $M(n,s)\leq W(n,s)$, $\forall n=0, 1, 2, ....$ and $\forall s\in\cS$.
\end{theorem}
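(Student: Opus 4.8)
The plan is to prove $M(n,s) \leq W(n,s)$ by induction on $n$, showing that whatever zero-error code achieves $M(n,s)$ messages from initial state $s$, its message count cannot exceed the value $W(n,s)$ produced by the dynamic programming recursion. The base case $n=0$ is immediate: with no channel uses, exactly one message can be sent, so $M(0,s) = 1 = W(0,s)$. For the inductive step, I would fix an optimal $(M(n,s), n)$ zero-error code starting from state $s$ and analyze the first transmission.

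\emph{First transmission analysis.} At state $s$, the encoder's first input symbol $x_1(m)$ partitions the $M(n,s)$ messages according to the symbol they send. This induces an input distribution $P_{X|S}(x|s)$ (the fraction of messages assigned to each input letter $x$, or more precisely a distribution we are free to optimize over). After this first use, the channel emits an output $y_1$ and transitions to a next state $s'$; because both are observed via feedback and state information, the decoder learns $(y_1, s')$. The zero-error requirement forces that all messages consistent with the observed pair $(y_1, s')$ must still be distinguishable in the remaining $n-1$ transmissions from state $s'$. The number of messages compatible with a given $(y_1, s')$ is exactly the number of messages whose first input symbol lies in $G(y, s'|s)$, and after conditioning on $(y_1,s')$ these messages form a sub-code of length $n-1$ starting at $s'$, hence their count is at most $M(n-1, s')$.

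\emph{Counting bound.} The key combinatorial step is a pigeonhole/counting argument. For the code to achieve zero error, for \emph{every} output-state pair $(y, s')$ reachable under the chosen input distribution, the number of messages mapped into $G(y,s'|s)$ cannot exceed $M(n-1,s')$. Summing or normalizing, the total number of messages $M(n,s)$ is bounded by $M(n-1,s')$ divided by the largest fraction of messages that could collide on a single output, i.e.
\begin{equation}
M(n,s) \leq \min_{s' \in \cS}\left\{ M(n-1,s')\left[\max_{y\in\cY}\sum_{x\in G(y,s'|s)} P_{X|S}(x|s)\right]^{-1}\right\}
\end{equation}
for the particular $P_{X|S}(\cdot|s)$ induced by the optimal code. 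Since $W(n,s)$ \emph{maximizes} the right-hand side over all input distributions, and the induction hypothesis gives $M(n-1,s') \leq W(n-1,s')$, replacing $M(n-1,s')$ by $W(n-1,s')$ and maximizing over $P_{X|S}(\cdot|s)$ can only increase the bound, yielding $M(n,s) \leq W(n,s)$.

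\emph{Main obstacle.} I expect the delicate point to be the counting step establishing that no more than $M(n-1,s')$ messages can survive a given pair $(y,s')$, together with the fraction bound involving $\max_y \sum_{x \in G(y,s'|s)} P_{X|S}(x|s)$. One must argue carefully that it is the \emph{worst-case} $s'$ (the minimizing next state, chosen adversarially by Nature, matching the $\min_{s'}$ in the DP) and the \emph{worst-case} output $y$ (the $\max_y$ inside) that jointly constrain the message count, and that the pigeonhole bound is tight enough to give the reciprocal factor rather than a weaker estimate. A subtlety is that the encoder's input assignment need not literally be a probability distribution over messages with uniform weights; the argument should treat $P_{X|S}(x|s)$ as the normalized count of messages using letter $x$, so that $\sum_{x\in G(y,s'|s)} P_{X|S}(x|s)$ is precisely the fraction of messages that become ambiguous on observing $(y,s')$. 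Handling the conventions $\tfrac{1}{0}=\infty$ and the empty-set case $G(y,s'|s)=\emptyset$ cleanly is a minor but necessary bookkeeping matter.
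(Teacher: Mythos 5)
Your proposal is correct and follows essentially the same route as the paper's own proof: induction on $n$, analysis of the first transmission in which the normalized message counts $f(x|s)=u(x|s)/M(n,s)$ play the role of the input pmf, the zero-error compatibility bound $M(n,s)\sum_{x\in G(y,s'|s)}f(x|s)\leq M(n-1,s')$ holding for every pair $(y,s')$, and then taking the worst case over $(y,s')$, invoking the induction hypothesis, and maximizing over pmfs to recover $W(n,s)$. The counting step you flag as the main obstacle is exactly the step the paper carries out, and your treatment of it (the induced distribution is precisely the normalized count, so the sum over $G(y,s'|s)$ is the fraction of messages made ambiguous) closes it with no gap.
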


\begin{proof} We prove the theorem by induction. First, the inequality holds when $n=0$.

Now, suppose $M(k,s)\leq W(k,s)$ is true $\forall k=0,...,n-1$ and
$\forall s\in\cS$. Fix an arbitrary initial state $s_0$. It is
sufficient to show that $M(n,s_0)\leq W(n,s_0)$ to prove the
converse.

For a fixed zero-error code   that has $M(n,s_0)$ messages, we
define
\begin{equation}\begin{split}
u(x|s_0)=&\text{number of messages with first transmitted}\\
         &\quad\text{ letter $x$ when initial state is $s_0$},\\
f(x|s_0)=&\frac{u(x|s_0)}{M(n,s_0)}.
\end{split}\end{equation}

Note that $f(\cdot|s_0)$ is a valid pmf.

After the first transmission, suppose the output is some
\mbox{$y\in\cY$} and the channel goes to state $s_1$. We have
$\sum_{x\in G(y,s_1|s_0)}u(x|s_0)$ messages, each of which with
positive probability gives output $y$ and changes the state to
$s_1$. To guarantee that the decoder can distinguish between these
messages in the following $n-1$ transmission, we must have
$\sum_{x\in G(y,s_1|s_0)}u(x|s_0)\leq M(n-1,s_1)$, which yields
\begin{equation}\begin{split}
&M(n,s_0)\sum_{x\in G(y,s_1|s_0)}f(x|s_0)\leq M(n-1,s_1).
\end{split}\end{equation}

Since the above inequality must hold, $\forall y\in\cY$, and
$\forall s_1\in\cS$
 \begin{equation}\begin{split}
     M(n,s_0)\leq \min_{s_1\in\cS} M(n-1,s_1)\left[\max_{y\in\cY}\sum_{x\in G(y,s_1|s_0)}f(x|s_0)\right]^{-1}
        \end{split}\end{equation}
 %
Since we assumed $M(n-1,s)\leq W(n-1,s)$ for all $s\in\cS$,
\begin{equation}
M(n,s_0) \leq \min_{s_1\in\cS} W(n-1,s_1)
\left[\max_{y\in\cY}\sum_{x\in G(y,s_1|s_0)}f(x|s_0)\right]^{-1}.
\end{equation}
Using the iterative formula of  $W(n,s_0)$ given in  (\ref{eq.dp})
and the fact that $f(\cdot|s_0)$ is a valid pmf, we obtain
\begin{equation}\begin{split}
M(n,s_0)\leq W(n,s_0).
\end{split}\end{equation}
Finally, since $s_0$ is arbitrarily fixed, we have $M(n,s)\leq
W(n,s)$, $\forall s\in\cS$. By induction, the theorem is proved.
\end{proof}

From the converse, Theorem \ref{theorem.converse}, and the
zero-error capacity definition \ref{def_zero_cap}, we have the
following upper bound
\begin{equation}\begin{split}\label{eq.c0fupper}
C_{0}=&\lim_{n\rightarrow\infty}\frac{\min_{s\in\cS}\log_2 M(n,s)}{n}\\
      \leq& \liminf_{n\rightarrow\infty}\frac{\min_{s\in\cS}\log_2 W(n,s)}{n}.
\end{split}\end{equation}

\section{Direct Theorem}\label{sec.direct}
\begin{theorem}\label{theorem.direct}
Assume $\min_{s\in\cS} V_{|\cS|}(s)>0$, then for any initial state
$s\in \mathcal S$  there exists an $n_0>0$ such that for $n>n_0$,
$\lfloor W(n,s)\rfloor$ messages can be transmitted with no more
than $n+|\cS|\lceil\log_2L\rceil$, where $L$  is a positive integer
that does not depend on $n$ and $s$.
\end{theorem}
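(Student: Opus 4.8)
The plan is to construct an explicit deterministic feedback code built from the optimizing input distributions $P_{X|S}^*(\cdot|s)$ that attain the maximum in the dynamic program (\ref{eq.dp}). Because the state is causally known to both encoder and decoder and the channel is used with feedback, after each transmission both parties share the same \emph{surviving set} of messages that remain consistent with the observed output-and-state sequence, and the code is correct precisely when this surviving set is a singleton when the protocol ends. I would run the scheme in two stages: a \emph{main stage} of $n$ transmissions in which the surviving set is driven down by imitating the DP recursion, followed by a short \emph{clean-up stage} that resolves a bounded residual ambiguity. The whole argument is an induction on the remaining horizon $n-k$ that parallels the induction used in the converse (Theorem \ref{theorem.converse}).

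In the main stage, suppose at step $k$ the channel is in state $s_k$ and there are $m_k$ surviving messages, with $m_0=\lfloor W(n,s)\rfloor$. I would partition these $m_k$ messages among the input letters so that input $x$ carries $u(x)$ messages, where $u(x)$ is an integer within $1$ of $m_k P_{X|S}^*(x|s_k)$ and $\sum_{x}u(x)=m_k$. After the realized output $y_k$ and next state $s_{k+1}$ are observed, the surviving set is exactly the messages assigned to inputs in $G(y_k,s_{k+1}|s_k)$, so $m_{k+1}=\sum_{x\in G(y_k,s_{k+1}|s_k)}u(x)\leq m_k\rho_k+|\cX|$, where $\rho_k=\max_{y\in\cY}\sum_{x\in G(y,s_{k+1}|s_k)}P_{X|S}^*(x|s_k)\leq 1$. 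Reading the optimality in (\ref{eq.dp}) as $W(n-k-1,s_{k+1})\geq \rho_k\,W(n-k,s_k)$ for every reachable $s_{k+1}$, and setting $E_k:=m_k-W(n-k,s_k)$ (so $E_0\leq 0$), the recursion collapses to $E_{k+1}\leq \rho_k E_k+|\cX|$. Thus the bulk of the surviving set is absorbed exactly by the DP value, and only the additive rounding error $E_k$ remains to be controlled.

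The heart of the proof — and the step I expect to be the main obstacle — is showing that $E_k$ stays bounded by a constant $L$ independent of $n$ and $s$, rather than growing linearly. Since $\rho_k\leq 1$ always, the recursion contracts only when $\rho_k$ is strictly below $1$, which happens exactly when genuine resolution occurs. Here I would invoke the hypothesis $\min_{s}V_{|\cS|}(s)>0$: by Theorem \ref{theorem.C=0} the encoder can force the trajectory to meet a \emph{positive} state within every window of $|\cS|$ steps, and at a positive state the optimizing $P_{X|S}^*$ necessarily places mass at least some channel-dependent $\epsilon>0$ on two non-adjacent inputs, so that $G(y,s_{k+1}|s_k)$ omits one of them and $\rho_k\leq 1-\epsilon$ for every $y$ and $s_{k+1}$. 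Consequently the product of the $\rho_k$ over any block of $|\cS|$ steps is at most $1-\epsilon$, while at most $|\cS|\,|\cX|$ is added per block; iterating $E_{k+1}\leq\rho_k E_k+|\cX|$ across blocks keeps $E_k\leq L$ for a constant $L$ depending only on $|\cS|$, $|\cX|$ and $\epsilon$. The delicate point to verify carefully is that the capacity-optimal policy can be taken to realize resolution at these forced positive states, i.e.\ that the two dynamic programs of Sections \ref{s_sufficient_C=0} and \ref{s_DP_associated} can be reconciled so that the $P_{X|S}^*$ above actually contracts where Theorem \ref{theorem.C=0} promises a positive state.

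Granting this bound, after the $n$ main transmissions every surviving set has size $m_n\leq W(0,s_n)+E_n\leq 1+L$, so the true message is pinned down to at most $L+1$ candidates, a set shared by encoder and decoder through the feedback. In the clean-up stage I would transmit the index of the true candidate using the guaranteed positive rate: the hypothesis yields $C_0\geq 1/|\cS|$ via Theorem \ref{theorem.C=0}, so from any state one error-free bit can be sent in $|\cS|$ transmissions, and hence (after relabelling $L+1$ as $L$) $\lceil\log_2 L\rceil$ bits can be sent in at most $|\cS|\lceil\log_2 L\rceil$ transmissions. Padding every trajectory to this fixed length keeps the code zero-error and of total length at most $n+|\cS|\lceil\log_2 L\rceil$, delivering $\lfloor W(n,s)\rfloor$ messages as claimed. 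The restriction to $n>n_0$ is needed only so that the surviving counts $m_k$ are large enough for the proportional splitting and the block-contraction to be in force; together with the converse bound (\ref{eq.c0fupper}) this identifies $C_0=\liminf_{n}\tfrac1n\min_{s}\log_2 W(n,s)$.
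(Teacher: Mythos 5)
Your overall architecture --- proportional splitting of the surviving set according to the DP maximizer with unit rounding errors, the recursion $E_{k+1}\leq \rho_k E_k+|\cX|$, and a clean-up stage that drives the channel to a positive state to resolve a bounded residual ambiguity --- is the same as the paper's. The genuine gap is exactly the step you flag as delicate: your block-wise contraction argument for bounding $E_k$ rests on two unproven claims, and neither can be established as stated. First, Theorem \ref{theorem.C=0} guarantees that a positive state can be forced within $|\cS|$ steps by the maximin strategy of the game (\ref{eq.cadd}); it says nothing about the trajectory induced by the maximizer $P^{*}_{X|S}$ of (\ref{eq.dp}), which is the policy your code must follow in order to carry $\lfloor W(n,s)\rfloor$ messages. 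Grafting the game strategy onto the code would in general sacrifice rate, so the two DPs cannot simply be ``reconciled.'' Second, even when the channel does sit in a positive state, the maximizer of (\ref{eq.dp}) need not place mass bounded away from zero on two non-adjacent letters: the optimal weights are dictated by the continuation values $W(k-1,\cdot)$, and when one successor state has a much larger continuation value the optimum concentrates almost all mass on the letters leading there, so $\rho_k$ can be arbitrarily close to $1$ at a positive state, and no uniform $\epsilon$ with $\prod_{\text{block}}\rho_i\leq 1-\epsilon$ exists.

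The repair needs no reconciliation of the two DPs; it is hidden in the recursion you already wrote. Unrolling $E_{k+1}\leq\rho_k E_k+|\cX|$ with $E_0\leq 0$ gives $E_n\leq|\cX|\sum_{j}\prod_{i>j}\rho_i$, and the DP inequality $W(n-i,s_i)\,\rho_i\leq W(n-i-1,s_{i+1})$ makes every tail product telescope:
\begin{equation}
\prod_{i=j+1}^{n-1}\rho_i\;\leq\;\frac{W(0,s_{n})}{W(n-j-1,s_{j+1})}\;=\;\frac{1}{W(n-j-1,s_{j+1})}.
\end{equation}
Then invoke the hypothesis in the other direction: $\min_{s}V_{|\cS|}(s)>0$ gives $C_0>0$ by Theorem \ref{theorem.C=0}, hence by the converse (Theorem \ref{theorem.converse}) there exist $\epsilon>0$ and $n_0$ such that $W(k,s)\geq M(k,s)\geq 2^{\epsilon k}$ for all $k>n_0$ and all $s$. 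Together with $W\geq 1$ this bounds the error sum by $|\cX|\bigl(n_0+1+2^{-\epsilon(n_0+1)}/(1-2^{-\epsilon})\bigr)$, a constant independent of $n$, $s$, and the realized trajectory. This is precisely how the paper closes the argument: the contraction comes not from per-step resolution at positive states but from the exponential growth of $W$ itself, against which the telescoped products are measured; positive states are needed only in the clean-up stage, which you handle as the paper does.
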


%

\begin{proof}
The direct part is proved using  deterministic codes
\cite{shannon56} rather than random codes. Let the solution and the
maximizer in the $k$th iteration ($k=1,2,...,n$) of (\ref{eq.dp}) be
$W(k,\cdot)$ and $P_{X|S}^{(k)}(\cdot|\cdot)$, respectively.

Suppose that at the first transmission the channel state is $s_1$
and the total number of messages transmitted through the channel is
$\lfloor W(n,s_1) \rfloor$. We divide the message set into $|\cX|$
groups and transmit $x=i$ for the messages in the $i$th group for
the first transmission. Let $m_i$ denote the number of messages in
the $i$th group. By similar arguments to those in \cite[p.
18]{shannon56}, we can control the size of each group such that:
\begin{equation}\begin{split}\label{eq.direct.m}
 &\text{if } P^{(n)}_{X|S}(i|s_1)>0, \quad\Bigg|\frac{m_i}{\lfloor W(n,s_1)\rfloor}-P^{(n)}_{X|S}(i|s_1)\Bigg|\leq \frac{1}{\lfloor W(n,s_1)\rfloor}; \\
&\text{if } P^{(n)}_{X|S}(i|s_1)=0, \quad m_i=0.
\end{split}\end{equation}
Both the transmitter and the receiver know how the messages are
divided before the transmission. An arbitrary message
$m\in\{1,...,\lfloor W(n,s_1)\rfloor\}$ is selected, and letter $i$
is sent if $m$ belongs to the $i$th group. The number of messages
about which the receiver is uncertain before the first transmission
is $Z_1=\lfloor M(n,s_1)\rfloor$.

After the first transmission, we obtain an output $y_1$, and the
channel state changes to $s_2$. Denote $Z_2$ as the number of
messages that are {\it compatible} with $(y_1,s_2)$, i.e., when
transmitting those messages,  $(y_1,s_2)$ is obtained with positive
probability.   $Z_2$ can be upper bounded in the following way:
\begin{equation}\begin{split}\label{eq.direct.t2}
Z_2 =& \sum_{x\in G(y_1,s_2|s_1)} m_x\\
    =& \lfloor W(n,s_1)\rfloor \sum_{x\in G(y_1,s_2|s_1)} \frac{m_x}{\lfloor W(n,s_1)\rfloor}\\
    \leq& \lfloor W(n,s_1)\rfloor \sum_{x\in G(y_1,s_2|s_1)} \left( P^{(n)}_{X|S}(x|s_1)+\frac{1}{\lfloor W(n,s_1)\rfloor} \right)\\
    \leq & \left\{\lfloor W(n,s_1)\rfloor \max_{y\in\cY}
    \sum_{x\in G(y,s_2|s_1)}  \left( P^{(n)}_{X|S}(x|s_1)\right) \right\} + |\cX|.
\end{split}\end{equation}
For convenience, let us define
\begin{equation}\begin{split}
J^{(k)}(s,s^\prime)= \max_{y\in\cY}\sum_{x\in G(y,s^\prime|s)} P^{(k)}_{X|S}(x|s).
\end{split}\end{equation}
Eq.~(\ref{eq.dp}) and (\ref{eq.direct.t2}) can be written ,
respectively, in terms of $J^{(k)}(s,s^\prime)$ as:
\begin{equation}\label{eq.direct.ineq}
W(k,s)\leq W(k-1,s^\prime) \left[J^{(k)}(s,s^\prime)\right]^{-1},
\forall k\in\bbZ^+,s\in\cS,s^\prime\in\cS.
\end{equation}
%
\begin{equation}\label{e_z2}\begin{split}
 Z_2 \leq & \lfloor W(n,s_1)\rfloor J^{(n)}(s_1,s_2) + |\cX|\\
\leq & W(n-1,s_2)+|\cX|,
\end{split}\end{equation}
where the last inequality is due to (\ref{eq.direct.ineq}).

Since both transmitter and receiver know $s_1$ and $s_2$ and the
transmitter knows the output $y_1$ through feedback, both of them
know which messages are {\it compatible} with $(y_1,s_2)$. In the
second transmission, the transmitter can further divide the
remaining $Z_2$ messages into groups according to
$P_{X|S}^{(n-1)}(\cdot|s_2)$, similar to eq.~{(\ref{eq.direct.m})}.
The way the messages are divided is known to the receiver. Suppose
the output letter is $y_2$ and the state goes to $s_3$. Following
the argument in the previous iteration, we have
\begin{equation}\begin{split}
 Z_3 \leq&  Z_2 J^{(n-1)}(s_2,s_3)+|\cX|\\
     \stackrel{(a)}{\leq}&   W(n-1,s_2)J^{(n-1)}(s_2,s_3)+ |\cX|\left(1+J^{(n-1)}(s_2,s_3)\right)\\
     \stackrel{(b)}{\leq}&   W(n-2,s_3)+ |\cX|\left( 1+J^{(n-1)}(s_2,s_3)
     \right),
\end{split}\end{equation}
where steps (a) and (b) follow from (\ref{e_z2}) and
(\ref{eq.direct.ineq}), respectively.

As the transmission proceeds, the channel state evolves as
$s_1,....,s_n,s_{n+1}$, and the output sequence is $y_1,...,y_n$.
The transmitter divides the remaining uncertain messages according
to $P^{(k)}_X(\cdot|s_k)$ for each transmission. After the $n$th
transmission, the number of messages remaning can be upper bounded
as:
\begin{eqnarray}\label{eq.Tn+1}
 \lefteqn{Z_{n+1}}\nonumber \\
  &\leq&  Z_{n}J^{(1)}(s_{n},s_{n+1}) + |\cX|\nonumber \\
     &\leq&  1+ |\cX|\left( 1+ J^{(1)}(s_{n},s_{n+1})
           + J^{(1)}(s_{n},s_{n+1})J^{(2)}(s_{n-1},s_{n})
 +\cdots +\prod_{i=1}^{n-1} J^{(i)}(s_{n+1-i},s_{n+2-i}) \right) 
\end{eqnarray}
Using Ineq.~(\ref{eq.direct.ineq}) iteratively, we obtain
\begin{equation}\begin{split}
W(k,s_{n+1-k}) \leq \left[\prod_{i=1}^{k}
J^{(i)}(s_{n+1-i},s_{n+2-i})\right]^{-1};
\end{split}\end{equation}
hence we can further upper bound $Z_{n+1}$ as
\begin{equation}
Z_{n+1}\leq 1 + |\cX|\left(1 + \frac{1}{W(1,s_{n})}+
\frac{1}{W(2,s_{n-1})} +\cdots+  \frac{1}{W(n-1,s_{2})}\right).
\end{equation}
Recall the assumption of the theorem $\min_{s\in\cS} V(|\cS|,s)>0$,
which implies, via Theorem \ref{theorem.C=0}, that $C_0>0$, and
follows from Theorem \ref{theorem.converse} we obtain that
\begin{equation}\label{eq.ass}\liminf_{n\rightarrow\infty}\min_{s\in\cS}\frac{1}{n}\log M(n,s)>0.
\end{equation}
Hence, there exists \mbox{$\epsilon>0$} and an integer $n_0$ such
that $\forall s\in\cS$, $\forall n>n_0$, $W(n,s) {\geq} M(n,s) {\geq} 2^{\epsilon n}$ (the first inequality is due to the converse proved in the
previous section, and second inequality is due to (\ref{eq.ass})).
Recall that $M(n,s)\geq 1$; we can thus further upper bound
$Z_{n+1}$ as
\begin{equation}\begin{split}
Z_{n+1}\leq& 1 + |\cX|\left(1+\sum_{k=1}^{n_0}\frac{1}{W(k,s_{n+1-k})}+ \sum_{k=n_0+1}^{\infty} 2^{-\epsilon n}\right)\\
       \leq& 1 + |\cX|\left(n_0+1 + \sum_{k=n_0+1}^{\infty} 2^{-\epsilon n}\right)\\
       =&1 + |\cX|\left(n_0+1 + \frac{2^{-\epsilon (n_0+1)}}{1-2^{-\epsilon}}\right)\\
       \triangleq& L.
\end{split}\end{equation}
Note that $L$ is finite and is independent of $n$ and $s_1$. This
means that after $n$ transmissions, the number of messages about
which the receiver is uncertain is not more than $L$.

The assumption that  $\min_{s\in\cS} V_{|\cS|}(s)>0$ implies that we
can drive the channel to a {\it positive} state with probability 1
in less than $|\cS|$ transmissions. In a positive state, we can
transmit 1 bit of information with zero-error; hence we can now
conclude that there exists a zero-error code such that $\lfloor
W(n,s) \rfloor$ messages can be transmitted with no more than
$n+|\cS|\lceil\log_2L\rceil$ transmissions.
\end{proof}

Based on the direct theorem, it is straightforward to derive a lower bound on the zero-error capacity:
\begin{equation}\begin{split}\label{eq.cu1}
C_{0}\geq& \liminf_{n\rightarrow\infty}\min_{s\in\cS}\frac{\log_2\lfloor W(n,s)\rfloor}{n+|\cS|\lceil\log_2L\rceil}\\
       =& \liminf_{n\rightarrow\infty}\frac{1}{n}\min_{s\in\cS}\log_2W(n,s),
\end{split}\end{equation}
given the condition $\min_{s\in\cS} V_{|\cS|}(s)>0$.
%
Combining ineq.~(\ref{eq.c0fupper}) and ineq.~(\ref{eq.cu1}), we
have proved  eq.~(\ref{eq.c0main}) thus
Theorem~\ref{theorem.main}.

\section{Solving the Dynamic Programming Problem}\label{s_seolving_DP}
Throughout this section, we assume that $\min_{s\in\cS}
V_{|\cS|}(s)>0$, i.e., we focus on channels with positive zero-error
capacity. Let us first introduce a few definitions so that we can
use the standard language of dynamic programming to rewrite
Eq.~(\ref{eq.dp}) in the form of Eq. (\ref{e_nonstandard}).
Basically, we take $\log_2$ on both sides of Eq.~(\ref{eq.dp}).
Define the value function as $J_n(s)=\log_2 W(n,s)$, the action as
$a=P_{X|S}(\cdot|s)$, and the reward as
\begin{equation}
r(s^\prime,a,s)=\log_2\left[\max_{y\in \cY }\sum_{x\in
G(y,s^\prime|s)}P_{X|S}(x|s)\right]^{-1}.\end{equation} And the DP
equation in (\ref{eq.dp}) becomes simply
\begin{equation}\label{e_iteration_DP}\begin{split}
&J_n(s)=\max_{a\in
A}\min_{s^\prime\in\cS}\left\{r(s^\prime,a,s)+J_{n-1}(s^\prime)\right\}.
\end{split}\end{equation}
where $A$ is the action space, $A=\{f(x):\quad \sum_{x}f(x)=1, f(x)\geq0\}$.

Theorem~\ref{theorem.main} states that
\begin{equation}\begin{split}
C_0=& \liminf_{n\rightarrow\infty}\frac{\min_{s\in\cS}J_n(s)}{n}.
\end{split}\end{equation}
Define an operator $T$ as follows,
\begin{equation}\label{eq.T}
(T\circ J)(s)= \max_{a\in A(s)} \min_{s'} \left\{
r(s',a,s)+J(s')\right\}.
\end{equation}
The DP equation can be rewritten in a compact form as follows,
\begin{equation}\begin{split}
&J_n(s)=(T\circ J_{n-1})(s),
\end{split}\end{equation}
with initial value $J_0(s)=0$. We also denote $T^n$ as applying   operator $T$ $n$ times.

\begin{lemma}\label{lemma.Tp} Let $W$ and $V$ denote two functions $\cS\mapsto \bbR^+$. The following properties of $T$ hold:
\begin{itemize}
\item[(a)]  If $W(s)\geq V(s), \; \forall s\in \mathcal S$,   then $T\circ W(s)\geq T\circ V(s)  \; \forall s\in \mathcal S$.
\item[(b)] If $W(s)=V(s)+d\; \forall s\in \mathcal S $, where $d$ is a constant,  then $T\circ W(s)= T\circ V(s)+d, \; \forall s\in \mathcal S$
\end{itemize}
\end{lemma}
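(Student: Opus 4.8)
The plan is to prove both monotonicity (a) and the constant-shift property (b) directly from the definition of the operator $T$ in~(\ref{eq.T}), exploiting only elementary properties of $\max$ and $\min$ over real numbers. Both properties are pointwise statements in $s$, so I would fix an arbitrary $s \in \cS$ and argue for that $s$; since $s$ is arbitrary, the conclusion follows for all states.

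For part (a), the key observation is that $\min$ and $\max$ are both monotone operations: if one real-valued function dominates another everywhere, then taking $\min$ over a set, or $\max$ over a set, preserves the inequality. Concretely, I would start from the hypothesis $W(s') \geq V(s')$ for all $s' \in \cS$. Adding the common quantity $r(s',a,s)$ to both sides gives $r(s',a,s) + W(s') \geq r(s',a,s) + V(s')$ for every fixed action $a$ and every next state $s'$. Taking $\min_{s'}$ on both sides preserves the inequality (the minimum of the larger family is at least the minimum of the smaller family), and then taking $\max_{a \in A(s)}$ likewise preserves it. This chain yields exactly $(T \circ W)(s) \geq (T \circ V)(s)$.

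For part (b), I would substitute the hypothesis $W(s') = V(s') + d$ directly into the definition. Inside the braces we then have $r(s',a,s) + W(s') = r(s',a,s) + V(s') + d$. Since $d$ is a constant that does not depend on $s'$, it factors out of the inner minimization: $\min_{s'} \{ r(s',a,s) + V(s') + d \} = \bigl( \min_{s'} \{ r(s',a,s) + V(s') \} \bigr) + d$. The same reasoning lets $d$ pass through the outer maximization over $a$, because adding a constant commutes with taking a supremum. Collecting terms gives $(T \circ W)(s) = (T \circ V)(s) + d$.

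I do not anticipate a genuine obstacle here; both claims are structural facts about how $\max$ and $\min$ interact with monotone shifts and additive constants. The only point requiring mild care is the presence of $\pm\infty$ values, which the reward $r$ is permitted to take (the paper notes $r$ may be $\pm\infty$): one should confirm that the additive and order manipulations remain valid under the usual extended-real-line conventions, in particular that adding a finite constant $d$ to $+\infty$ or $-\infty$ behaves as expected and does not disturb the commutation of $d$ with $\min$ and $\max$. Since $W, V$ map into $\bbR^+$ and $d$ is a finite constant, no indeterminate form such as $\infty - \infty$ arises, so the argument goes through cleanly.
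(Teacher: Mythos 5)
Your proposal is correct and matches the paper's approach: the paper simply states that both parts follow directly from the definition of $T$, and your argument spells out exactly those elementary facts (monotonicity of $\min$ and $\max$ for part (a), and the commutation of an additive constant with $\min$ and $\max$ for part (b)). Your extra remark on extended-real arithmetic is a harmless refinement of the same one-line proof.
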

\begin{proof}
Both parts of the lemma follow directly from the definition of
$T$.
\end{proof}
\begin{lemma}\label{lemma.Jnp} The following properties of $J_n$ hold:
\begin{itemize}
  \item[(a)] The sequence $\{\min_{s}J_n(s)\}$ is {sup-additive}, i.e.,  
        $\min_{s}J_{n+m}(s) \geq \min_{s}J_n(s)+\min_{s}J_{m}(s)$
  \item[(b)] The sequence $\{\max_{s}J_n(s)\}$ is { sub-additive}, i.e.,  
          $\max_{s}J_{n+m}(s) \leq \max_{s}J_n(s)+\max_{s}J_{m}(s)$
\end{itemize}
\end{lemma}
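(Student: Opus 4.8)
The plan is to prove both parts by exploiting the two structural properties of the operator $T$ established in Lemma~\ref{lemma.Tp} — monotonicity (part (a)) and constant-shift invariance (part (b)) — together with the semigroup identity $J_{n+m}=T^n J_m$, which is immediate from $J_k=T^k J_0$. The guiding idea is that since the two sequences differ only through how $T$ propagates a function, I can sandwich $J_m$ between constant functions equal to its minimum and its maximum, push those constants through $n$ applications of $T$ using the shift property, and compare using monotonicity.

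For part (a), I would first bound $J_m$ from below by the constant function equal to its own minimum: writing $c_m\triangleq\min_{s'}J_m(s')$, we have $J_m(s)\geq c_m$ for all $s\in\cS$. Since $J_0\equiv 0$, the constant function $c_m$ equals $J_0+c_m$, so iterating the shift property of Lemma~\ref{lemma.Tp}(b) gives $T^n(c_m)=J_n+c_m$ by a short induction on $n$. Applying monotonicity, Lemma~\ref{lemma.Tp}(a), $n$ times to the pointwise inequality $J_m\geq c_m$ then yields $J_{n+m}=T^n J_m\geq T^n(c_m)=J_n+c_m$. Taking the minimum over $s$ on both sides, and noting that the additive constant $c_m=\min_{s'}J_m(s')$ passes through the minimization, gives exactly $\min_s J_{n+m}(s)\geq \min_s J_n(s)+\min_s J_m(s)$.

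Part (b) is the mirror image. I would bound $J_m$ from above by the constant $d_m\triangleq\max_{s'}J_m(s')$, so that $J_m(s)\leq d_m$ for all $s\in\cS$. The identical combination of the shift property, to obtain $T^n(d_m)=J_n+d_m$, and monotonicity gives $J_{n+m}=T^n J_m\leq J_n+d_m$ pointwise, and taking the maximum over $s$ produces the claimed sub-additivity $\max_s J_{n+m}(s)\leq \max_s J_n(s)+\max_s J_m(s)$.

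The argument is essentially routine once the reduction to the constant-function comparison is made, so there is no serious obstacle. The only step requiring a moment's care is the identity $T^n(c_m)=J_n+c_m$: one must check that the shift property is preserved under $n$-fold composition, which is a one-line induction using $T(\,\cdot\,+d)=T(\,\cdot\,)+d$ at each stage. Everything else follows from monotonicity and the trivial observation that adding a constant commutes with both $\min_s$ and $\max_s$.
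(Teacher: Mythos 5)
Your proof is correct and follows essentially the same route as the paper's: both use the identity $J_{n+m}=T^n\circ J_m$, bound $J_m$ by the constant function $\min_{s'}J_m(s')$ (resp.\ $\max_{s'}J_m(s')$), and then combine the monotonicity and constant-shift properties of Lemma~\ref{lemma.Tp} to pull the constant out of $T^n$. If anything, your write-up is slightly more explicit than the paper's, since you spell out the induction showing the shift property survives $n$-fold composition and you treat part (b) in full rather than calling it ``similar.''
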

\begin{proof}
We prove the first property here. The proof of the second one is similar.
\begin{equation}\begin{split}
\min_{s}J_{n+m}(s)&=\min_{s}(T^n \circ J_m)(s)\\
                  &\stackrel{(a)}\geq \min_{s} \left(T^n \circ \min_{s'}J_m(s')\right)(s)\\
                  &= \min_{s} \left(T^n \circ [J_0+\min_{s'}J_m(s')]\right)(s)\\
                  &\stackrel{(b)}{=} \min_{s} (T^n \circ J_0)(s)+
                  \min_{s'}J_m(s'),
\end{split}
\end{equation}
where the steps (a) and (b) follow  from  parts (a) and (b) of Lemma
\ref{lemma.Tp}, respectively.
\end{proof}

\begin{theorem}\label{t_bounds}
The $\liminf$ in Theorem~\ref{theorem.main} can be replaced by
$\lim$, i.e.,
\begin{equation}
C_0 =\lim_{n\rightarrow \infty} \min_{s}\frac{J_n(s)}{n},
\end{equation}
and for all $n\in \mathbb{Z}^+$ the following bounds hold
\begin{equation}\label{eq.bounds}
\min_{s}\frac{J_n(s)}{n}\leq C_0 \leq \max_{s}\frac{J_n(s)}{n}.
\end{equation}
\end{theorem}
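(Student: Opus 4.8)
The plan is to establish the two bounds in (\ref{eq.bounds}) first, since they immediately pin down the limit. The key tool is Lemma~\ref{lemma.Jnp}, which tells us that the sequence $\{\min_s J_n(s)\}$ is super-additive and the sequence $\{\max_s J_n(s)\}$ is sub-additive. By Fekete's lemma, a super-additive sequence $\{b_n\}$ satisfies $\lim_{n\to\infty} b_n/n = \sup_n b_n/n$, while a sub-additive sequence satisfies $\lim_{n\to\infty} b_n/n = \inf_n b_n/n$. First I would apply Fekete's lemma to $b_n = \min_s J_n(s)$ to conclude that $\lim_{n\to\infty}\min_s J_n(s)/n$ exists and equals $\sup_n \min_s J_n(s)/n \geq \min_s J_n(s)/n$ for every fixed $n$. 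This gives the lower bound: $\min_s J_n(s)/n \leq \lim_{m\to\infty}\min_s J_m(s)/m$, and by Theorem~\ref{theorem.main} the right-hand side is exactly $C_0$ (once we know the $\liminf$ is a genuine limit). Symmetrically, applying Fekete's lemma to the sub-additive sequence $\max_s J_n(s)$ yields $\lim_{n\to\infty}\max_s J_n(s)/n = \inf_n \max_s J_n(s)/n \leq \max_s J_n(s)/n$ for every $n$.

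The crux is then to show that these two limits coincide and equal $C_0$, which upgrades the $\liminf$ of Theorem~\ref{theorem.main} to a true $\lim$. The natural approach is to observe that $\min_s J_n(s) \leq \max_s J_n(s)$ trivially, so
\begin{equation}
\lim_{n\to\infty}\frac{\min_s J_n(s)}{n} \leq \lim_{n\to\infty}\frac{\max_s J_n(s)}{n}.
\end{equation}
Both limits exist by the preceding paragraph. Theorem~\ref{theorem.main} identifies $C_0 = \liminf_{n\to\infty}\min_s J_n(s)/n$; since the Fekete argument shows this $\liminf$ is actually a limit, we have $C_0 = \lim_{n\to\infty}\min_s J_n(s)/n$. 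The remaining task is to argue the reverse inequality $\lim_n \max_s J_n(s)/n \leq C_0$, so that the upper limit is squeezed down to $C_0$ as well.

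The hard part will be establishing that the sub-additive (max) limit does not exceed $C_0$. I would approach this by relating $\max_s J_n(s)$ to $\min_s J_n(s)$ through a bounded additive gap. The intuition from the stochastic-game structure is that, under the standing assumption $\min_{s\in\cS}V_{|\cS|}(s)>0$, any state can be driven to any ``productive'' configuration in at most $|\cS|$ steps, so the best starting state and the worst starting state cannot differ in reward-to-go by more than a constant times $n$-independent quantity per block. Concretely, I would try to show there exists a constant $D$ (depending only on the channel, e.g. through $|\cS|$ and the per-step reward bounds) such that $\max_s J_{n}(s) \leq \min_s J_{n-c}(s) + D$ for some fixed lag $c \leq |\cS|$. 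Dividing by $n$ and letting $n\to\infty$ would then force $\lim_n \max_s J_n(s)/n \leq \lim_n \min_s J_n(s)/n = C_0$, completing the squeeze. If such a uniform gap bound is awkward to produce directly, the fallback is to invoke the direct theorem (Theorem~\ref{theorem.direct}) and the converse (Theorem~\ref{theorem.converse}): the converse gives $C_0 \leq \lim_n \min_s J_n(s)/n$ trivially through $M(n,s)\leq W(n,s)$ applied at the minimizing state, while a minor extension of the direct construction, started from the worst-case state and exploiting reachability of positive states within $|\cS|$ steps, controls the max. I expect the reachability-within-$|\cS|$-steps argument to be the delicate point, as it must be made uniform over all starting states to yield the clean bounds displayed in (\ref{eq.bounds}).
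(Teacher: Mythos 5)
You prove the theorem in your first two paragraphs without realizing it; the ``crux'' and ``hard part'' you then set up come from a misreading of the statement. The theorem asserts only that $C_0=\lim_n\min_s J_n(s)/n$ and the two-sided bound (\ref{eq.bounds}); it does \emph{not} assert that $\lim_n\max_s J_n(s)/n$ equals $C_0$. Everything you need is already in hand and assembles into one chain, valid for every fixed $k$:
\begin{equation*}
\min_s \frac{J_k(s)}{k} \;\leq\; \sup_m \min_s\frac{J_m(s)}{m} \;=\; C_0 \;\leq\; \lim_{m\to\infty}\max_s\frac{J_m(s)}{m} \;=\; \inf_m\max_s\frac{J_m(s)}{m} \;\leq\; \max_s\frac{J_k(s)}{k},
\end{equation*}
where the first equality is Fekete's lemma applied to the super-additive sequence of Lemma~\ref{lemma.Jnp}(a) combined with Theorem~\ref{theorem.main} (the $\liminf$ is a limit and equals the supremum), the middle inequality holds because $\min_s J_m(s)\leq\max_s J_m(s)$ termwise and $C_0$ is the limit of the left side, and the second equality is Fekete's lemma for the sub-additive sequence of Lemma~\ref{lemma.Jnp}(b). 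This is precisely the paper's proof; nothing further is required.

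The ``remaining task'' you pose, $\lim_n \max_s J_n(s)/n\leq C_0$, is not only unnecessary but false in general, so the bounded-gap lemma $\max_s J_n(s)\leq \min_s J_{n-c}(s)+D$ that you plan to prove cannot hold. The assumption $\min_s V_{|\cS|}(s)>0$ guarantees only that from every state a positive state is visited with probability one; it does not make the states mutually reachable. Counterexample: let $\cS=\{0,1\}$ with both states absorbing (the state never changes), where in state $0$ the channel is noiseless binary (two of the three input letters merged into one output) and in state $1$ it is noiseless ternary. Both states are positive, so $\min_s V_{|\cS|}(s)>0$, yet $J_n(0)=n$ and $J_n(1)=n\log_2 3$, so the gap $\max_s J_n(s)-\min_s J_n(s)=n(\log_2 3-1)$ grows linearly and $\lim_n\max_s J_n(s)/n=\log_2 3>1=C_0$. (This is exactly the decomposable-chain phenomenon alluded to in the remark following Theorem~\ref{theorem.bellman}.) Delete your third paragraph and conclude with the displayed chain; the proof is then complete and identical in substance to the paper's.
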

\begin{proof}
Following  Lemma \ref{lemma.Jnp} and Fekete's lemma \cite[Ch.
2.6]{CombinatorialOptimization_SchrijverBook}, we obtain the
following two limits:
\begin{eqnarray}
\lim_{n\to\infty} \min_{s}\frac{J_n(s)}{n}&=& \sup_n
\min_{s}\frac{J_n(s)}{n},\nonumber \\
\lim_{n\to\infty} \max_{s}\frac{J_n(s)}{n}&=& \inf_n
\max_{s}\frac{J_n(s)}{n}.
\end{eqnarray}
Finally, from Theorem \ref{theorem.main} we obtain:
\begin{equation}
\max_{s}\frac{J_k(s)}{n} \geq \lim_{n\rightarrow \infty}
\max_{s}\frac{J_n(s)}{n} \geq C_0=\lim_{n\rightarrow \infty}
\min_{s}\frac{J_n(s)}{n}\geq  \min_{s}\frac{J_k(s)}{k},
\end{equation}
 for all $k\in
\mathbb{Z}^+$.
\end{proof}

Eq.~(\ref{eq.bounds}) provides a numerical way to approximate
$C_0$. We now alter to the case that an analytical solution in the
limit can be obtained via Bellman equations.
\begin{theorem}\label{theorem.bellman}
{(\it Bellman equation)} If there exists a positive bounded function
$g: \cS\mapsto \bbR^+$ and a constant $\rho$ that satisfy
\begin{equation}\label{e_bellman}
g(s)+\rho=(T \circ g)(s)
\end{equation}
 then $\lim_{n\rightarrow\infty}
\frac{1}{n} J_n(s)=\rho$.
\end{theorem}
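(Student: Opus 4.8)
The plan is to exploit the two structural properties of the operator $T$ established in Lemma~\ref{lemma.Tp}---monotonicity (part (a)) and the constant-shift property (part (b))---to sandwich the iterates $J_n = T^n \circ J_0$ between two translates of the fixed-point function $g$, and then to squeeze. The fixed-point hypothesis (\ref{e_bellman}) is exactly what makes the translates of $g$ transform predictably under $T$.

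First I would show by induction on $n$ that the Bellman equation propagates under iteration, namely
\begin{equation*}
(T^n \circ g)(s) = g(s) + n\rho, \qquad \forall s \in \cS,\ n \geq 0.
\end{equation*}
The base case $n=0$ is immediate. For the inductive step, assuming $(T^n \circ g)(s) = g(s) + n\rho$, part (b) of Lemma~\ref{lemma.Tp} applied with the constant $d = n\rho$ gives $(T^{n+1}\circ g)(s) = \bigl(T \circ (g + n\rho)\bigr)(s) = (T\circ g)(s) + n\rho$, and the claim follows from (\ref{e_bellman}).

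Next, since $\cS$ is finite and $g$ is positive and bounded, the quantities $g_{\min} = \min_{s} g(s)$ and $g_{\max} = \max_{s} g(s)$ are finite. Because $J_0(s) = 0$, this yields the elementary sandwich
\begin{equation*}
g(s) - g_{\max} \leq J_0(s) \leq g(s) - g_{\min}, \qquad \forall s\in\cS.
\end{equation*}
I would then apply $T^n$ to each side. Monotonicity (part (a)) is preserved under composition, so $T^n$ is monotone and the inequalities survive; meanwhile the constant-shift property (part (b)), combined with the iterated identity above, evaluates the two outer terms explicitly, giving
\begin{equation*}
g(s) + n\rho - g_{\max} \leq J_n(s) \leq g(s) + n\rho - g_{\min}, \qquad \forall s\in\cS.
\end{equation*}

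Finally I would divide through by $n$ and let $n\to\infty$. Since $g(s)$, $g_{\min}$ and $g_{\max}$ are constants independent of $n$, both the left- and right-hand bounds tend to $\rho$, so the squeeze theorem gives $\lim_{n\to\infty} \tfrac{1}{n}J_n(s) = \rho$ for every $s$. The only point requiring genuine care is the passage from the single-step properties of $T$ to their $n$-fold iterates; this is where the induction in the first step does the real work, and once the identity $(T^n\circ g) = g + n\rho$ is in hand the remainder is a routine squeeze. A minor technical matter worth checking is that the rewards may take the value $\pm\infty$, but since $g$ and $\rho$ are assumed finite, the sandwich keeps every $J_n(s)$ finite and no indeterminate expressions arise.
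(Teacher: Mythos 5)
Your proposal is correct and follows essentially the same route as the paper: both arguments sandwich $J_n$ between constant translates of $T^n g$ using the monotonicity and constant-shift properties of Lemma~\ref{lemma.Tp}, evaluate $T^n g = g + n\rho$ via the fixed-point equation, and conclude by a squeeze. The only cosmetic difference is that the paper uses positivity of $g$ to get the upper bound $J_0 \leq g$ directly (rather than your $J_0 \leq g - g_{\min}$), and your explicit induction $(T^n\circ g) = g + n\rho$ is left implicit in the paper's final sentence.
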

\begin{proof} Assume that there exists a positive bounded function $g: \cS\mapsto \bbR^+$
and a constant $\rho$ that satisfy $g(s)+\rho=(T \circ g)(s)$.
Define $g_0(s)=g(s)$, $g_n(s)=T^ng_0(s)$. Since $ J_0(s)=0\leq g_0(s)$,
then according to part (a) of Lemma \ref{lemma.Tp} $J_n(s)\leq
g_n(s)$. Let $d=\max_s g(s)$. Then $J_0+d\geq g_0$. Hence, according
to part (a) of Lemma \ref{lemma.Tp}, $g_n(s)\leq J_n(s)+d$.
Therefore we have,
\begin{equation}
g_n(s)-d\leq J_n(s)\leq g_n(s).
\end{equation}
Finally, $g(s)+\rho=(T \circ g)(s)$ implies that $\lim_{n\to \infty}
\frac{g_n(s)}{n}=\rho$; hence $\lim_{n\to
\infty}\frac{J_n(s)}{n}=\rho$.
\end{proof}
{\it Remark:} $\rho$ does not depend on the initial state, which
hints that for some decomposable Markov chains, it is impossible
to find a $g: \cS\mapsto \bbR^+$ and a constant $\rho$ to satisfy
the Bellman equation.

\section{Examples}\label{s_examples}
Here we provide three examples and solve them analytically. For the
first two examples, we also find the regular feedback capacity using
\cite{Chen05}.
\begin{example}\label{ex_1}
We consider the very simple example illustrated in
Fig.~\ref{fig.ex1}. The channel has two states. In state 0, the
channel is a binary symmetric channel (BSC) with positive cross
probability. In state 1, the channel is a BSC with  0 cross
probability. Roughly speaking, in state 0, the channel is noisy,
and, in state 1, the channel is noiseless. Suppose the channel state
evolves as a Markov process and is  independent of the input and
output. If the current state is 0, the next channel state is 1 with
certainty. If the state is 1, the channel goes to state 0 with
probability $p>0$ or stays at state 0 with probability $1-p$. Thus,
the channel stays in the noisy state a geometric length of time, and
returns to the perfect state immediately.

\begin{figure}[h]
 \centering
 \psfrag{S0}[][][1]{$S=0$}
 \psfrag{S1}[][][1]{$S=1$}
  \psfrag{0}[][][0.8]{$0$}
 \psfrag{1}[][][0.8]{$1$}

  \psfrag{a}[][][1]{$1$}
   \psfrag{b}[][][1]{$1-p$}
   \psfrag{c}[][][1]{$p$}
   \includegraphics[width=5.5cm]{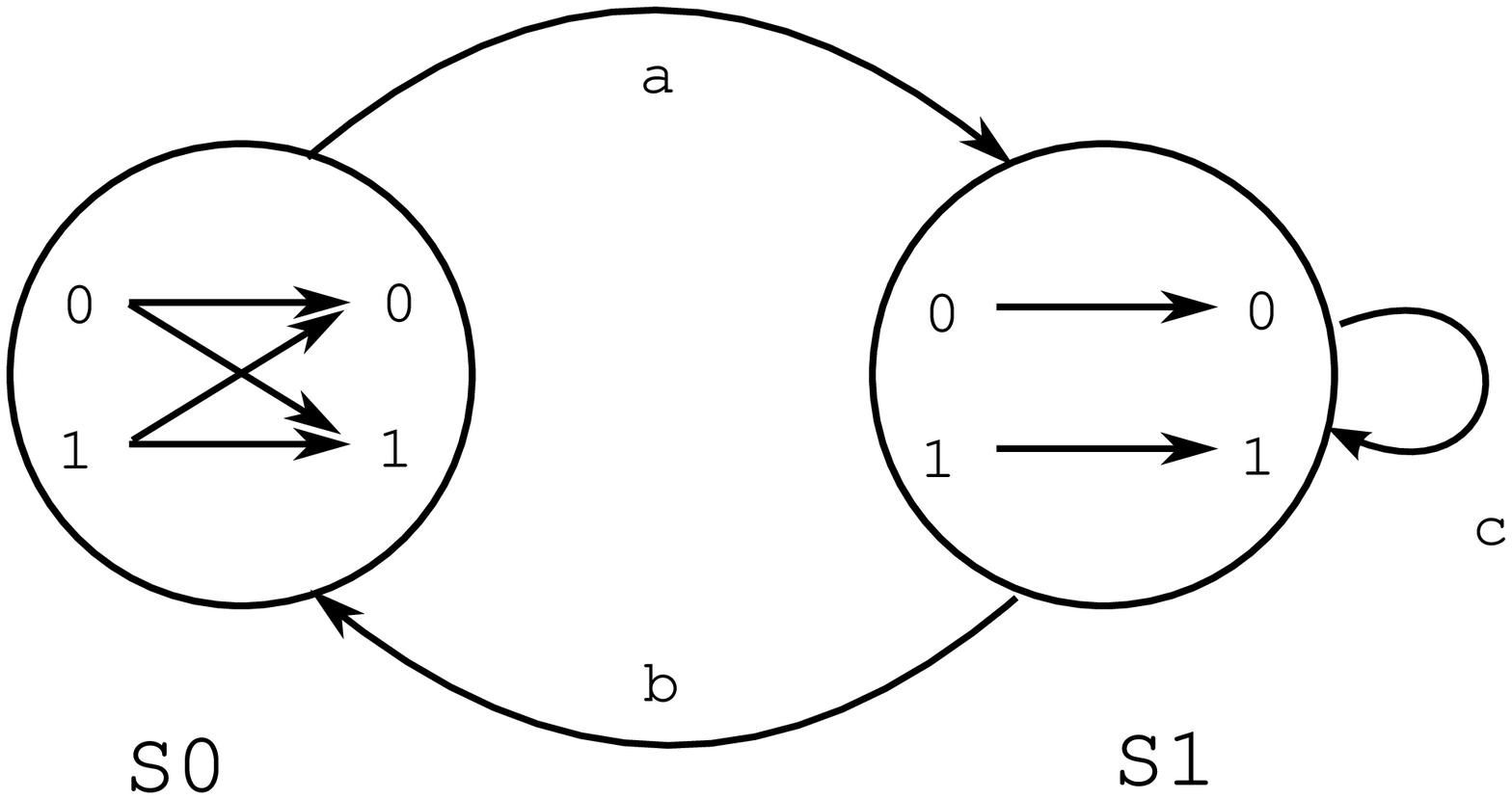}
 \caption{Channel topology of Example 1}
 \label{fig.ex1}
\end{figure}

{\it Finding $C_0$  by calculating $W(n,s)$:} for this channel
$G(y,0|0)=\emptyset$, $G(y,1|0)=\{0,1\}$,
$G(y,0|1)=G(y,1|1)=\{y\}$.
Using eq.~(\ref{eq.dp}) 
, we have the solution to the DP problem of the 1st iteration as
\begin{equation}\begin{split}
W(1,0)=&\max_{P_{X|S}(\cdot|0)}\min\left\{1,1\right\}=1\\
W(1,1)=&\max_{P_{X|S}(\cdot|1)}\Bigg[\max\left\{P_{X|S}(0|1),P_{X|S}(1|1)\right\}\Bigg]^{-1}\\
=&2.
\end{split}\end{equation}
For the 2nd iteration, we have
\begin{equation}\begin{split}
W(2,0)=&\max_{P_{X|S}(\cdot|0)}\Bigg[W(1,1)\min\left\{1,1\right\}\Bigg]=2\\
W(2,1)=&\max_{P_{X|S}(\cdot|1)}W(1,0)\Bigg[\max\left\{P_{X|S}(0|1),P_{X|S}(1|1)\right\}\Bigg]^{-1}=2.
\end{split}\end{equation}
By induction and some simple algebra, we obtain the solution to the DP problem at the $n$th iteration:
\begin{equation}\begin{split}
W(n,0)=&2^{\lfloor n/2\rfloor},\text{ and } W(n,1)=2^{\lceil n/2
\rceil}.
\end{split}\end{equation}
Thus
\begin{equation}\begin{split}
C_{0}=1/2.
\end{split}\end{equation}
Alternatively, we can solve the example by funding a solution to
Bellman equation (\ref{e_bellman}).

{\it Finding $C_0$ via Bellman equation:} the Bellman equation for
the channel is simply the following,
\begin{equation}\begin{split}
g(0)&=g(1)-\rho,\\
g(1)&=1+g(0)-\rho.
\end{split}\end{equation}
Using simple algebra we obtain $\rho=\frac{1}{2}, g(0)=v,
g(1)=v+\frac{1}{2}$. We note that we can achieve the zero-error
capacity with feedback and state information  simply by transmitting
1 bit of information whenever the channel state is 1.

{\it Finding the regular feedback capacity $C^f$:} To calculate the
regular capacity we use the result of Chen and Berger in
\cite[Theorem 6]{Chen05}. The theorem states that if the channel is
strongly irreducible and strongly aperiodic, then the capacity is
\begin{equation}\label{e_chen_berger}
C=\max_{P_{X|S}}\sum_{k=0}^{|\mathcal S|-1} \pi_kI(X;Y|S=k),
\end{equation}
where $\pi_k$ is the equilibrium distribution  of state $k$ induced
by the input distribution $P_{X|S}$.

The channel is strongly irreducible and strongly aperiodic if the
matrix $T$ that is defined as
\begin{equation}
T(k,l)=\min_{x}\{\Pr(S_i=l|X_k=x,S_{i-1}=k)\}
\end{equation}
is irreducible and aperiodic for any $x\in \mathcal X$.
 Since the transition probability of the state does
not depend on the input, and since the state transition matrix is
irreducible and aperiodic for any $p<1$, the capacity is given by
(\ref{e_chen_berger}); hence
\begin{eqnarray}
C(p)&=&\max_{P_{X|S}}\pi_0I(X;Y|S=0)+ \pi_1I(X;Y|S=1)\nonumber \\
&=& \pi_1\nonumber \\
&=& \frac{1}{2-p}
\end{eqnarray}

\begin{figure}[h!]{
 \psfrag{p}[][][0.8]{$p$}
  \psfrag{C(pz)}[][][0.8]{$C(p)$}
\centerline{\includegraphics[width=6.5cm]{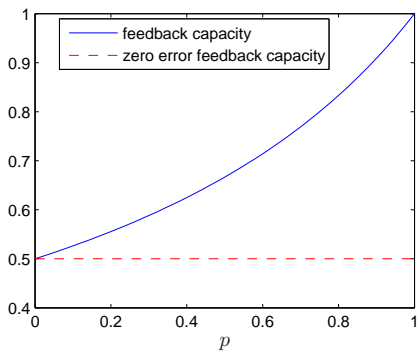}}
\caption{Feedback capacity and zero-error feedback capacity of the
channel in Example 1 for different values of $p=\Pr\{S=1|S=1\}$.}
\label{f_example2} }\end{figure}


\end{example}

\begin{example}\label{ex_2}
Let us consider another channel with two states as illustrated in
Fig.~\ref{fig.ex2}. In state 0, the channel is a Z-channel. In state
1, the channel is a BSC with  0 cross probability. The next channel
state is determined by the output. If the output is 0, the channel
goes to state 0; if the output is 1, the channel goes to state 1;
hence the regular feedback of the output includes the state
information.

 It is tempting to make full use of state 1, i.e., to
transmit 1 bit of information, but as a consequence the channel goes
to the undesirable state 0 half the time, and the rate would be only
$\frac{1}{2}$.

\begin{figure}[h]
 \centering
 \psfrag{S0}[][][1]{$S=0$}
 \psfrag{S1}[][][1]{$S=1$}
  \psfrag{0}[][][0.8]{$0$}
 \psfrag{1}[][][0.8]{$1$}
  \psfrag{a}[][][1]{$y=1$}
   \psfrag{b}[][][1]{$y=0$}
   \psfrag{c}[][][1]{$y=1$}
   \psfrag{d}[][][1]{$y=0$}
   \includegraphics[width=6cm]{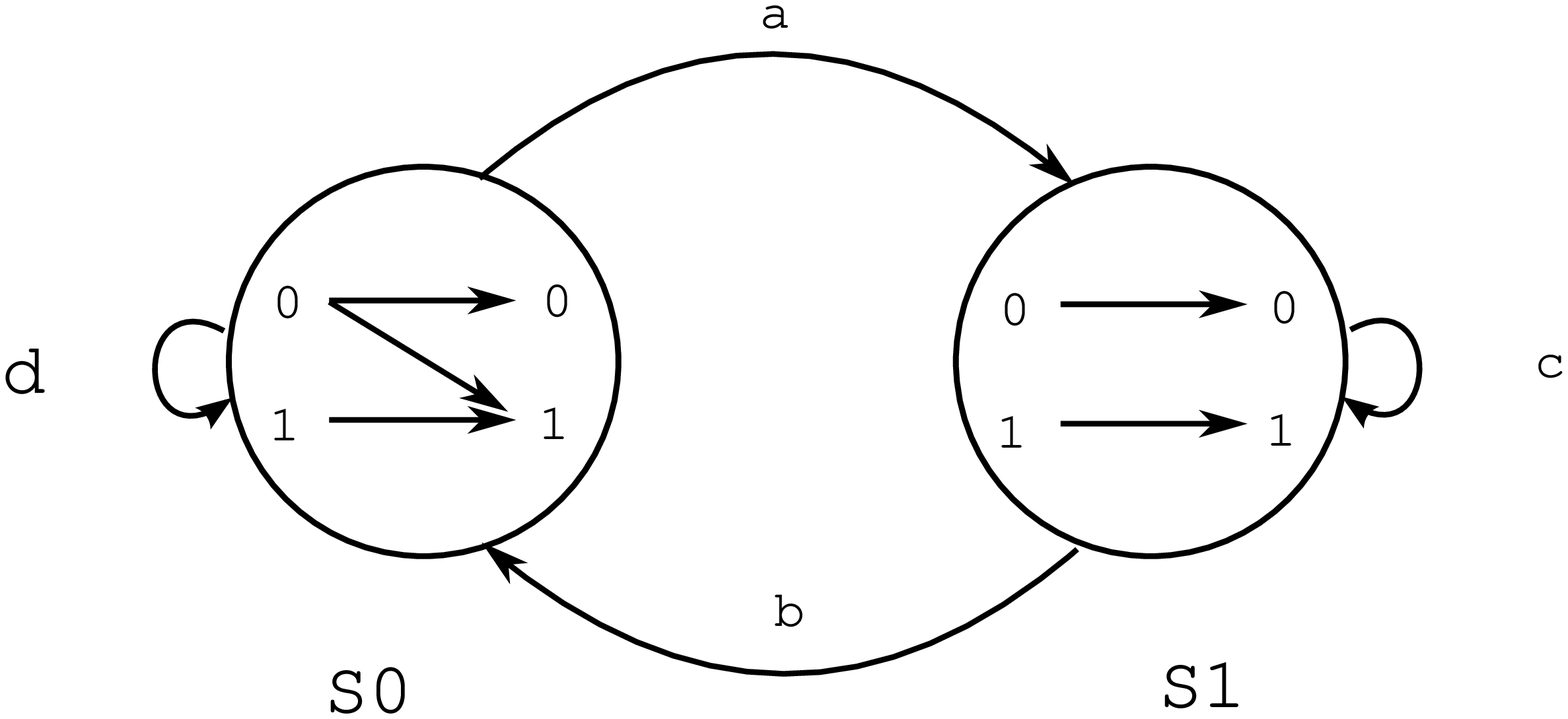}
\caption{Channel topology of Example 2.}
 \label{fig.ex2}
\end{figure}


{\it Finding $C_0$  by calculating $W(n,s)$:} For this channel, $G(0,0|0)=\{0\}$, $G(1,1|0)=\{0,1\}$, $G(0,0|1)=\{0\}$, $G(1,1|1)=\{1\}$ and all the other combinations  yield empty sets. 
For initial state 0, we have
\begin{equation}\begin{split}
W(n,0)
=&\max_{P_{X|S}(\cdot|0)}\min\left\{\frac{W(n-1,0)}{P_{X|S}(0|0)}, W(n-1,1)\right\}\\
=&W(n-1,1)
\end{split}\end{equation}
The maximum is achieved by setting $P_{X|S}(0|0)=0$. For initial state 1, we have
\begin{equation}\begin{split}
W(n,1)
=&\max_{P_{X|S}(\cdot|1)}\min\left\{\frac{W(n-1,0)}{P_{X|S}(0|1)}, \frac{W(n-1,1)}{P_{X|S}(1|1)}\right\}\\
=&\max_{P_{X|S}(\cdot|1)}\min\left\{\frac{W(n-2,1)}{P_{X|S}(0|1)}, \frac{W(n-1,1)}{P_{X|S}(1|1)}\right\}\\
=&W(n-2,1)+W(n-1,1)
\end{split}\end{equation}
By setting \mbox{$P(0|1)=\frac{W(n-2,1)}{W(n-2,1)+W(n-1,1)}$}, the
maximum is achieved. Recall $W(0,1)=1$. Notice that $W(1,1)=2$,
which can be computed directly. Thus, both  $W(n,1)$ and $W(n,0)$
are a Fibonacci sequences (with proper shifts). Therefore, \mbox{$\lim
\frac{\log_2 W(n,1)}{n}=\lim \frac{\log_2 W(n,0)}{n}=\log_2\frac{1+
\sqrt{5}}{2}$}. From Theorem~\ref{theorem.main}, we have
\begin{equation}\begin{split}
C_{0}=\log_2 \frac{1+ \sqrt{5}}{2}\thickapprox 0.6942,
\end{split}\end{equation}
which is the log of the golden ratio. Here, we list the first few
values of $W(n,s)$ in Table~\ref{table.ex2}.

%
%

\begin{table}[h]
\normalsize \centering \caption{$W(n,s)$ which equals to the number
of messages that can be transmitted error-free through the channel
in Example \ref{ex_2} in $n$ steps starting at state
$s$}\label{table.ex2}

\begin{tabular}{||l||c|c|c|c|c||}
\hline \hline \backslashbox{$s$}{$n$} & 1& 2& 3 & 4 &5\\
\hline\hline 0 &1& 2& 3 & 5 &8
 \\ \hline
 1 & 2& 3& 5 & 8 &13\\
    \hline \hline
\end{tabular}
\end{table}

{\it Finding $C_0$ via  a Bellman equation:} Since the channel input is
binary, the actions are equivalent to two numbers:
$p_0=P_{X|S}(0|0)$, $p_1=P_{X|S}(0|1)$. Bellman's equation become
\begin{eqnarray}
J(0)+\rho&=&\max_{0\leq p_0\leq 1}
\min\left\{\log\frac{2}{p_0}+J(0),J(1)\right\} \nonumber \\
J(1)+\rho&=&\max_{0\leq p_1\leq 1}
\min\left\{\log\frac{1}{p_1}+J(0),\log\frac{1}{1-p_1}+J(1)\right\}
\end{eqnarray}
which implies that $p_0=0$ and
\begin{equation}\begin{split}
J(0)&=J(1)-\rho,\\
J(1)&=J(0)+\log_2\frac{1}{p_1}-\rho,\\
\log_2\frac{1}{p_1}+J(0)&=\log_2\frac{1}{1-p_1}+J(1)
\end{split}\end{equation}
the solution of which is
$\rho=\log_2\frac{\sqrt{5}+1}{2},p_1=\frac{3-\sqrt{5}}{2}$.

It is of interest to observe that starting at state 1, any binary
sequence  with length $n$ and no consecutive 0's can be transmitted
with zero-error in $n$ transmissions. The number of such sequences as
a function of $n$ is also a Fibonacci sequence. Since we can always
transmit a 1 to drive the channel from state 0 to state 1, this is
actually one way to achieve the zero-error capacity.

{\it Finding the regular feedback capacity $C^f$:}
%
This channel is not strongly irreducible, since the matrix transition
$P_{S_i|S_{i-1}, X=0}$ is not irreducible; hence, the stationarity of
the optimal policy used by Chen and Berger \cite{Chen05} requires
additional justification. By invoking theory on the infinite-horizon
average-reward dynamic programming  we show that a stationary
policy achieves the optimum of the DP and hence Eq.
(\ref{e_chen_berger}) holds.

The feedback-capacity of the channel in Example 2 can be formulated
according to \cite{Chen05} and \cite{PermuterCuffVanRoyWeissman08}
as:
\begin{equation}
C=\lim_{N \rightarrow \infty} \frac{1}{N}
\max_{\{P_{X_{n}|S_{n}}\}_{n=1}^N}\sum_{n=1}^{N} I(X_n;Y_n|S_{n}),
\end{equation}
and this is equivalent to an infinite-horizon average-reward
DP with finite state space and compact actions
where:
\begin{itemize}
\item the state of the DP is the state of the channels i.e., $S_n$,
\item the actions of the DP are the input distributions
$p_0\in[0,1]$ and  $p_1\in[0,1]$, where $p_0=P_{X|S}(0|0)$,
$p_1=P_{X|S}(0|1)$.
\item the reward at time $n$ given that the state of the DP is 0 or 1 is
$I(X_n;Y_n|S_n=0)=H_b(p_0p)-p_0H_b(p)$ or
$I(X_n;Y_n|S_n=1)=H_b(p_1)$, respectively,
\item the transition probability given the actions $p_1$ and $p_2$ is
$P_{S_n|S_{n-1}}(0|1)=p_1$ and $P_{S_n|S_{n-1}}(0|0)=p_0p$.
\end{itemize}
Next, we claim that it is enough to consider the action $p_1\in
[\epsilon, 1]$ for some $\epsilon>0$. First we note that for
$\epsilon\leq \frac{1}{6}$
\begin{equation}\label{e_entropy_bound}
H(2\epsilon)>H(\epsilon)+\epsilon,
\end{equation}
since $\frac{dH_b(x)}{dx}> 1$ for $x<\frac{1}{3}$.

Next we show that it is never optimal to have an action  $p_1\leq
\frac{1}{6}$. Let $J_n(0)$ and $J_n(1)$ be the maximum rewards to go
in $n$ steps starting at state $0$ and $1$, respectively, and let
assume that the optimal action in state 1 is $p^*_1< \frac{1}{6}$,
then
\begin{eqnarray}
J_n(0)&\stackrel{(a)}{=}&H(p^*_1)+(1-p^*_1) J_{n-1}(0)+ p^*_1 J_{n-1}(1) \nonumber \\
&\stackrel{(b)}{=}&H(p_1)+(1-2 p^*_1) J_{n-1}(0)+ 2p^*_1 J_{n-1}(1) +p^*_1(J_{n-1}(0)-J_{n-1}(1))\nonumber \\
&\stackrel{(c)}{\leq }&H(p^*_1)+(1-2 p^*_1) J_{n-1}(0)+ 2p^*_1 J_{n-1}(1) +p^*_1\nonumber \\
&\stackrel{(d)}{< }&H(2p_1)+(1-2 p^*_1) J_{n-1}(0)+ 2p^*_1
J_{n-1}(1),
\end{eqnarray}
where step (a) follows from the dynamic programming formulation;
step (b) follows from the fact that we added and subtracted
$p^*_1(J_{n-1}(0)-J_{n-1}(1))$; and step (c) follows from the fact that
$J_{n-1}(0)-J_{n-1}(1)\leq 1$; this is because we can choose
$p_0=0$, which means that in one epoch time we can cause the state
to change from 0 to 1 with probability 1, and the reward in one
epoch time is always less than 1. Finally, step (d) follows from
(\ref{e_entropy_bound}). Since step (d) corresponds to the action
$2p^*_1$,  it implies that an optimal policy would never include the
action $p^*_1< \frac{1}{6}$.

Now we invoke \cite[Theorem 4.5]{Arapos93_average_cose_survey} that
states that if the reward is a continuous function of the actions,
and for any action the corresponding state chain is irreducible
(unchain), then the optimal policy is stationary. Since the reward
function is continuous in $p_0,p_1$ and since for any
$p_0\in[0,1],p_1\in [\frac{1}{6},1]$ the state process is a
irreducible, we conclude that the optimal policy $p^*_1,p^*_2$ is
stationary (time-invariant), and therefore the capacity is given by
(\ref{e_chen_berger}).


\begin{figure}[h!]{
 \psfrag{pz}[][][0.8]{$p$}
  \psfrag{C(pz)}[][][0.8]{$C(p)$}
\centerline{\includegraphics[width=6.5cm]{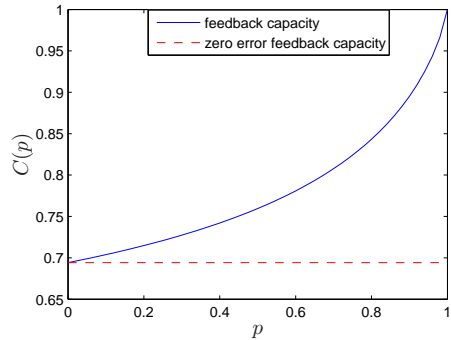}}
\caption{Capacity and zero-error capacity of the channel in Example
2 for different values of $p=\Pr\{Y=0|X=0,S=0\}$.}
\label{f_example2} }\end{figure}

 Now, using (\ref{e_chen_berger}), we obtain that the
regular feedback capacity as a function of $p$ is
\begin{equation}\label{e_capacity_example2}
C^f(p)=\max_{p_0,p_1} ( \pi_0\left(H_b(p_0p)-p_0H_b(p)\right)+
\pi_1H_b(p_1)),
\end{equation}
where $(\pi_0,\pi_1)$ are the equilibrium distributions given by
$\pi_0=\frac{p_1}{1+p_1-p_0p}$ and $\pi_1=1-\pi_0$. Fig.
\ref{f_example2} shows a numerical evaluation 
(\ref{e_capacity_example2}) as a function of $p$.

\end{example}

\begin{example}
We consider here an example with three states with a trinary input
and trinary output. The topology of the channel is depicted in Fig.
\ref{f_example3}. The channel conditional distribution $P(s',y|x,s)$
has  the form of $P(s',y|x,s)=P(s'|x,s)P(y|x,s)$, where state $s=0$
is a perfect state , $s=1$ is a good state and $s=0$ is a bad state;
the states 1,2,3 can transmit $\log3, 1$ and $0$ bits with zero
error probability.

We first evaluate the zero-error capacity numerically using the
dynamic programming value iteration, i.e., Eq. (\ref{eq.bounds}),
and then, using the numerical evaluation, we conjecture an
analytical solution, which we verify via the Bellman equation.

\begin{figure}[h!]{
\psfrag{X}[][][0.8]{$X$} \psfrag{Y}[][][0.8]{$Y$}

\centerline{\includegraphics[width=8cm]{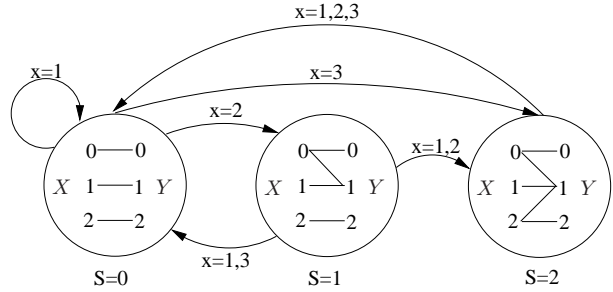}}
\caption{Channel topology of Example 3.} \label{f_example3}
}\end{figure}

{\it Evaluating $C_0$ using a value iteration algorithm:} We
calculated  50 iterations of the DP value iteration formula given in
(\ref{e_iteration_DP}). The action space of player 1 is the
stochastic matrix $P_{X|S}$, and we quantize each element of the
stochastic matrix with a $10^{-4}$ resolution. Fig. \ref{f_J}
depicts the value of $\max_s J_n(s)$ and $\max_s J_n(s)$ which
according to Theorem \ref{t_bounds} are upper and lower bounds, respectively, on
the zero-error capacity.

After 50 iterations, we obtain that the first player's action $P_{X|S}$
is given by
\begin{equation} \label{e_PXS}
P_{X|S} =\left[ \begin{tabular}{l l l}
0.4656 & 0.3177 & 0.2167 \\
0 & 0.3177 & 0.6823 \\
0 & 0 & 1 \\
\end{tabular}
\right],
\end{equation}
and the the reward $J_{50}(s)-J_{49}(s)$, which is an estimate of the
zero-error capacity, is $1.10283$ for all $s\in{0,1,2}$.
\begin{figure}[h!]{
\psfrag{max}[][][0.9]{$ \leftarrow \max_s
\frac{J_n(s)}{n}$} \psfrag{min}[][][0.9]{$\color{red} \;\;
\leftarrow \min_s \frac{J_n(s)}{n}$} \psfrag{It}[][][0.9]{$n$}
\psfrag{C}[][][0.9]{$\;\;\;\;\;\;\;\;\;\;\;\;\;\;\;\;\;\;\;\;\;\;\;\;\;\;\;\;\;\;\;\;\;\;\;\;\;
\leftarrow J_{50}(s)-J_{49}(s)=1.1028$}
\centerline{\includegraphics[width=6.5cm]{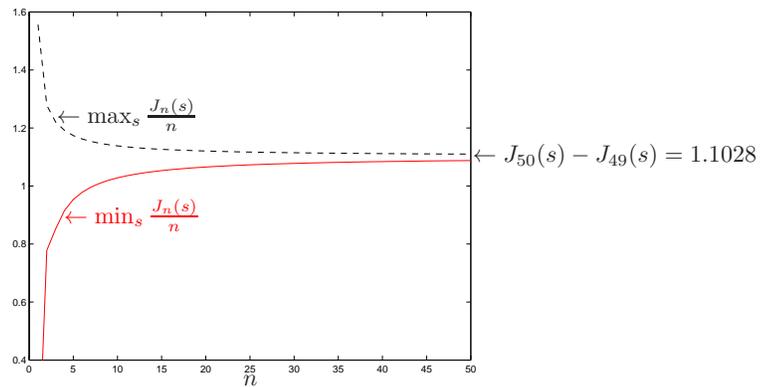}}
\caption{Upper bound, $\max_s J_n(s)$, and lower bound, $\min_s
J_n(s)$, on the zero-error feedback capacity of the channel in
Example 3. The value $J_{50}(s)-J_{49}(s)=1.102$ is an estimate of
$C_0$.} \label{f_J} }\end{figure}

{\it Analytical solution via Bellman equation:} We conjecture that
the optimal policy of Player 1 is a stochastic matrix of the form
given in (\ref{e_PXS}), i.e., $P_{X|S}(1|1)=P_{X|S}(1|0)$, and
$P_{X|S}(0|1)=P_{X|S}(0|2)=P_{X|S}(1|2)=0$. Based on this
assumptions and the notation $a_0\triangleq P_{X|S}(0|0)$ and
$a_1\triangleq P_{X|S}(1|0)$, the Bellman equation becomes:
\begin{eqnarray}
\rho+J(0)&=&\max_{a_0,a_1}\min\{-\log a_0+J(0), -\log a_1+J(1),-log
(1-a_1-a_0)+J(2) \}\nonumber \\
\rho+J(1)&=&\max_{a_1}\min\{-\log a_1+J(2), -\log (1-a_1)+J(0) \}\nonumber \\
\rho+J(2)&=&J(0).
\end{eqnarray}
Using simple algebraic manipulation, we obtain that
\begin{eqnarray}
a_1&=&(1-a_1)^3\nonumber \\
\rho&=&\log \frac{(1-a_1)}{a_1},
\end{eqnarray}
which implies that $a_1=1+u-\frac{1}{3u}$, where
$u=\sqrt[3]{-\frac{1}{2}+\sqrt{\frac{1}{4}+\frac{1}{27}}}$, hence
$a_1=0.31767...$ and
\begin{equation}
C_0=-\log (1-a_1)=1.102926....
\end{equation}

\end{example}
\section{Conclusions}
We introduced a DP formulation for computing the zero-error feedback
capacity for FSCs with state information at the decoder and encoder.
The DP formulation, which can also be viewed as a stochastic game
between two players, is a powerful tool that allows us to evaluate
numerically the zero-error feedback capacity and in many cases as
shown in the paper, to find an analytical solution via a fixed-point
 equation.

\section{Acknowledgements}
The authors would like to thank Professor Thomas Cover for very helpful discussions and comments.
This work is supported by the National Science Foundation through the grants CCF-0515303 and CCF-0635318.

\bibliographystyle{unsrt}
\bibliographystyle{IEEEtran}
\bibliography{ref}
\end{document}